\newcommand{\set}[1]{\{{#1}\}}
\newcommand{\half}{\ensuremath{\frac{1}{2}}}
\newcommand{\cM}{\mathcal{M}}
\newcommand{\I}{\mathcal{I}}
\newcommand{\prob}{\mathcal{P}}
\newcommand{\m}{\cM}
\newcommand{\mnn}{\cM_{{nn}}}
\newcommand{\mt}{\cM_{{T}}}
\newcommand{\minv}{\cM_{{inv}}}
\newcommand{\mtree}{\cM_{tree}(T)}
\newcommand{\comment}[1]{}
\newcommand{\condcomment}[2]{\ifthenelse{#1}{#2}{}}
\theoremstyle{plain}
\newtheorem{theorem}{Theorem}[section]
\newtheorem{definition}{Definition}[section]
\newtheorem{lemma}[theorem]{Lemma}
\newtheorem{Remark}[theorem]{Remark}
\title{Mixing Times of Self-Organizing Lists and Biased Permutations}
\author{
  Prateek Bhakta
    \thanks{College of Computing, Georgia Institute of
      Technology, Atlanta, GA 30332-0765.  Supported in part by NSF
      CCF-0830367 and a Georgia Institute of Technology ARC Fellowship. }
  \and Sarah Miracle
    \thanks{College of Computing, Georgia Institute of Technology, Atlanta, GA
      30332-0765.  Supported in part by a DOE Office of Science Graduate
      Fellowship, NSF CCF-0830367 and a ARCS Scholar Award.}
  \and Dana Randall
    \thanks{College of Computing, Georgia Institute of Technology, Atlanta, GA
      30332-0765.  Supported in part by NSF CCF-0830367 and CCF-0910584.}
  \and Amanda Pascoe Streib
    \thanks{School of Mathematics, Georgia Institute of Technology, Atlanta, GA
      30332-0280. Supported in part by the National Physical Sciences
      Consortium Fellowship and NSF CCF-0910584.}}
\begin{document}
\date{}
\maketitle

\thispagestyle{empty}  
\begin{abstract}
Sampling permutations from $S_n$ is a fundamental problem from probability
theory.  The nearest neighbor transposition chain ${\cal{M}}_{nn}$ is known to
converge in time $\Theta(n^3 \log n)$ in the uniform case \cite{wilson} and
time $\Theta(n^2)$ in the constant bias case, in which we put adjacent
elements in order with probability $p \neq 1/2$ and out of order with
probability $1-p$ \cite{BBHM05}.   Here we consider the variable bias case
where the probability of putting an adjacent pair of elements in order depends
on the two elements, and we put adjacent elements $x<y$ in order with
probability $p_{x,y}$ and out of order with probability $1-p_{x,y}$.  The
problem of bounding the mixing rate of ${\cal{M}}_{nn}$ was posed by Fill
\cite{F03b, F03a} and was motivated by the Move-Ahead-One self-organizing list
update algorithm.  It was conjectured that the chain would always be rapidly
mixing if $1/2 \leq p_{x,y} \leq 1$ for all $x < y$, but this was only known
in the case of constant bias or when $p_{x,y}$ is equal to 1/2 or 1, a case
that corresponds to sampling linear extensions of a partial order.  We prove
the chain is rapidly mixing for two classes: ``Choose Your Weapon,'' where we
are given $r_1, \dots, r_{n-1}$ with $r_i \geq 1/2$ and $p_{x,y}=r_x$ for all
$x<y$ (so the dominant player chooses the game, thus fixing his or her
probability of winning), and ``League Hierarchies,'' where there are two
leagues and players from the A-league have a fixed probability of beating
players from the B-league, players within each league are similarly divided
into sub-leagues with a possibly different fixed probability, and so forth
recursively.  Both of these classes include permutations with constant bias as
a special case.  Moreover, we also prove that the most general conjecture is
false.  We do so by constructing a counterexample where $1/2 \leq p_{x,y} \leq
1$ for all $x< y$, but for which the nearest neighbor transposition chain
requires exponential time to converge.  
\end{abstract}

\newpage
\setcounter{page}{1}
\section{Introduction}
Sampling from the permutation group $S_n$ is one of the most fundamental
problems in probability theory.  A natural Markov chain that has been studied
extensively is a symmetric chain that makes nearest neighbor transpositions,
${\cal{M}}_{nn}.$  After a series of papers  \cite{diasha, DSC93b} 
Wilson \cite{wilson} showed a tight bound of $\Theta(n^3 \log n)$ on the
mixing time, with upper and lower bounds within a factor of two.  Subsequently
Benjamini et al. \cite{BBHM05} considered a biased version of this Markov
chain where we select a pair of adjacent elements at random and put them in
order with probability $p > 1/2$ and out of order with probability $1-p$.
They relate this biased shuffling Markov chain to a chain on an asymmetric
simple exclusion process (ASEP) and showed that they both converge in $\Theta
(n^2)$ time. These bounds were matched by Greenberg et al. \cite{GPR09} who
also generalized the result on ASEPs to sampling biased surfaces in two and
higher dimensions in optimal $\Theta(n^d)$ time.  

In this paper we consider a generalization where we are always at least as
likely to put a pair of adjacent elements in increasing order as out of order,
but where the bias can vary depending on the values of the two elements.  More
precisely, we are given input parameters ${\bf P} = \{p_{i,j}\}$ for all $1
\leq i, j \leq n$.  The Markov chain ${\cal{M}}_{nn}$ iteratively chooses a
pair of adjacent elements uniformly, and if they are $i$ and $j$ we put $i$
ahead of $j$ with probability $p_{i,j}$ and we put $j$ ahead of $i$ with
probability $p_{j,i} = 1-p_{i,j}$.  We are interested in understanding whether
$\mnn$ is efficient in this generalized context.  We call the case where $1/2
\leq p_{i,j} \leq 1$ for all $i< j$ \emph{positively biased}.  In this case,
the fully ordered permutation $1, 2, \dots, n$ is at least as likely in
stationarity as every other permutation.   It is not difficult to see that
$\mnn$ can take exponential time without this condition.  

The problem of bounding the mixing rate of $\mnn$ in the variable bias setting
was raised by Jim Fill \cite{F03b, F03a} who considered it in the context of
the Move-Ahead-One (MA1) self-organizing list update algorithm.  In the MA1
protocol, elements are chosen according to some underlying distribution and
they move up by one in a linked list after each request is serviced, if
possible.  Thus, the most frequently requested elements will move to the front
of the list and will eventually require less access time. If we consider a
pair of adjacent elements $i$ and $j$, the probability of performing a
transposition that moves $i$ ahead of $j$ is proportional to $i$'s request
frequency, and similarly the probability of moving $j$ ahead of $i$ is
proportional to $j$'s frequency, so the transposition rates vary depending on
$i$ and $j$ and we are always more likely to put things in order (according to
their request frequencies) than out of order.  Fill conjectured that when the
transposition probabilities ${\bf P}$ also satisfy a monotonicity condition
whereby $p_{i,j} \leq p_{i, j+1}$ and $p_{i,j} \geq p_{i+1,j}$ for all $1 \leq
i < j \leq n$, then the chain is always rapidly mixing.  In fact, he
conjectured that the spectral gap is always minimized when $p_{i,j}=1/2$ for
all $i,j$, a problem he refers to as the ``gap problem.''  He verified that
the conjecture is true for $n=4$ and gave experimental evidence for slightly
larger $n$.  

Although Fill posed the gap problem in a widely circulated manuscript ten
years ago, there has been very little progress toward solving it.  For general
$n,$ the chain has only been shown to be rapidly mixing in two settings. The
first is the constant bias case for which Benjamini et al. \cite{BBHM05}
showed a mixing time of $\theta(n^2)$ when $p_{i,j}=p > 1/2$ for all $i<j.$ 
The second case has all of the $p_{i,j}$ with $i < j$ equal to 1/2 or 1; in
this context the nearest neighbor chain ${\cal{M}}_{nn}$ samples linear
extensions of a partial order and was shown by Bubley and Dyer \cite{BD98} to
mix in $O(n^3 \log n)$ time.

\vskip.1in
\underbar{\bf Our results:} \ 
In this paper we show that the Markov chain ${\cal{M}}_{nn}$  is always
rapidly mixing for two significantly larger classes of inputs which we call
``Choose Your Weapon'' and ``League Hierarchies.''  In the Choose Your Weapon
class we are given a set of input parameters $r_1, \dots, r_{n-1}$
representing each player's ability to win a duel with his or her weapon of
choice.  When a pair of neighboring players are chosen to compete, the
dominant player gets to choose the weapon, thus determining his or her
probability of winning the match.  In other words, we set $p_{i,j} = r_i$ when
$i< j$. We show that the nearest neighbor transposition chain ${\cal{M}}_{nn}$
is rapidly mixing for any choice of $\{r_i\}$.   In the League Hierarchy class
we are given input parameters $q_1, \dots, q_{n-1}$ along with a binary tree
$T$ whose internal vertices are labeled with the $q_i$ and whose leaves are
labeled with the elements $1, \dots, n$.  We think of the leaves descending
from the left branch of the root as the A-league and the right branch as the
B-league, and whenever players from the two leagues are matched up, the player
from the A-league has an advantage indicated by the probability associated
with the root.  Likewise, within the A-league we have Tier-1 and Tier-2
players, and the probability that Tier-1 players win matches against Tier-2
players is determined by the probability at the root of that subtree.  Thus,
$p_{i,j} = q_{i \wedge j}$ for all $i < j$, where $i \wedge j$ is the lowest
common ancestor of the leaves labelled $i$ and $j$.  We show that a related
chain  including additional transpositions is rapidly mixing for any choice of
$\set{q_i}$, and that ${\cal{M}}_{nn}$ is also if the $\set{q_i}$ additionally
satisfies ``weak monotonicity'' (i.e., $p_{i,j} \leq p_{i,j+1}$ if $j > i$).  
We note that both of these classes are generalizations of the uniform bias
setting, which can be seen by taking all of the $r_i$ or $q_i$ to be constant.  

In addition, we disprove the most general form of the conjecture by
constructing a set ${\bf P}$ for which the chain requires exponential time,
even in the positive bias case where $p_{i,j} > 1/2$ for all $i<j$.  Our
example is motivated by models in statistical physics that exhibit a phase
transition arising from a ``disordered phase'' of high entropy and low energy,
an ``ordered phase'' of high energy and low entropy, and a bad cut separating
them that is both low energy and entropy.  This example does not satisfy the
monotonicity condition of Fill, but does give insight into why bounding the
mixing rate of the chain in more general settings has proven quite
challenging.

\vskip.1in
\underbar{\bf Techniques:} 
For the positive results, our strategy is to use various combinatorial
representations of permutations and interpret the moves of ${\cal{M}}_{nn}$ in
these new settings.  In each case there is a natural Markov chain in the new
setting including additional moves (also transpositions) that can be analyzed
using simple arguments.  We then reinterpret the new moves in terms of the
original permutations so that we can deduce bounds on the mixing rate of the
nearest neighbor transposition chain as well.  In each case the new Markov
chain consists of a family of transpositions and are themselves interesting in
the context of generating random permutations.

For the Choose Your Weapon class, we map permutations to {\it Inversion
Tables} \cite{Knuth, Turrini} that, for each element $i$, record how many
elements $j > i$ come before $i$ in the permutation.  We consider a  Markov
chain ${\cal{M}}_{inv}$ that simply increments or decrements a single element
of the inversion table in each step; using the bijection with permutations
this corresponds to adding additional transpositions of elements that are not
necessarily nearest neighbors to the Markov chain ${\cal{M}}_{nn}$.
Remarkably, this allows ${\cal{M}}_{inv}$ to decompose into a product of
simple one-dimensional random walks and bounding the convergence time is very
straightforward.  Finally, we use comparison techniques \cite{DSC93, RT98} to
bound the mixing time of the nearest neighbor chain ${\cal{M}}_{nn}$ for all
choices of inputs $r_1, \dots, r_{n-1}$.  This approach also gives new, far
simpler proof of fast mixing in the case of uniform bias.

For the League Hierarchy class, we introduce a new combinatorial
representation of the permutation that associates a bit string $b_v$ to each
node $v$ of a binary tree with $n$ leaves.  Specifically, $b_v \in \{L,
R\}^{\ell_v}$ where $\ell_v$ is the number of leaves in $t_v$, the subtree
rooted at $v$, and for each element $i$ of the sub-permutation corresponding
to the leaves of $t_v$, $b_v(i)$ records whether $i$ lies under  the left or
the right branch of $v$.  The set of these bit strings is in bijection with
the permutations.  We consider a chain ${\cal{M}}_{tree}$ that allows
transpositions exactly when they correspond to a nearest neighbor
transposition in exactly one of the bit strings.  Thus, the mixing time of
${\cal{M}}_{tree}$ decomposes into a product of $n-1$ ASEP chains and we can
conclude that the chain ${\cal{M}}_{tree}$ is rapidly mixing using results in
the uniform bias case \cite{BBHM05, GPR09}.  Again, we use comparison
techniques to conclude that the nearest neighbor chain is also rapidly mixing
when we have weak monotonicity, although ${\cal{M}}_{tree}$ which simply
allows additional transpositions is always rapidly mixing.  

For the negative result showing slow mixing, the choice of ${\bf P}$ was
motivated by a related question arising in the context of biased staircase
walks \cite{GPR09}.  In that context, we are sampling ASEP configurations with
$n$ zeros and $n$ ones, which map bijectively onto walks on the Cartesian
lattice from $(0, n)$ to $(n, 0)$ that always go to the right or down.  The
probability of each walk $w$ is proportional to $\Pi_{xy < w} \lambda_{xy}$,
where the bias $\lambda_{xy}\geq 1/2$ is assigned to the square at $(x, y)$
and $xy < w$ whenever the square at $(x, y)$ lies underneath the walk $w$.  We
show that there are settings of the $\{\lambda_{xy}\}$ which cause the chain
to be slowly mixing  from any starting configuration (or walk).  In
particular, we show that at stationarity the most likely configurations will
be concentrated near the diagonal from $(0,n)$ to $(n,0)$ (the high entropy,
low energy states) or they will extend close to the point $(n,n)$ (the high
energy, low entropy states) but it will be unlikely to move between these sets
of states because there is a bottleneck that has both low energy and low
entropy.  
Finally, we use the reduction from biased permutations to biased lattice paths
to produce a positively biased set of probabilities {\bf P} for which
${\cal{M}}_{nn}$ also requires exponential time to mix from any starting
configuration.

\section{Preliminaries}

We begin by formalizing our model. Let $\Omega = S_n$ be the set of all
permutations $\sigma=(\sigma(1),\sigma(2),\ldots, \sigma(n))$ of $n$ integers.
We consider Markov chains on $\Omega$ whose transitions transpose two elements
of the permutation.  A permutation $\sigma$ is represented as a list of
elements, $\sigma(1),\sigma(2),\ldots, \sigma(n)$.  We are also given a set
${\bf P}$, consisting of $p_{i,j}\in [0,1]$ for each $1\leq i\neq j\leq n$,
where for any $i<j$, $p_{i,j} \geq 1/2$ and $p_{j,i}= 1 - p_{i,j}$.
The Markov chain $\mnn$ will sample from $\Omega$ using ${\bf P}$.

\vspace{.1in} \noindent  {\bf The Nearest Neighbor Markov chain $\mnn$ } 

\vspace{.05in} \noindent {\tt Starting at any permutation $\sigma_0$, iterate
the following:

\begin{itemize}
  \item At time $t$, select an index $i\in [n-1]$ uniformly at random (u.a.r).
    \begin{itemize}
      \item Swap the elements $\sigma_t(i),\sigma_t(i+1)$ with probability
        $p_{\sigma_t(i+1),\sigma_t(i)}$ to obtain $\sigma_{t+1}$.
      \item Do nothing with probability $p_{\sigma_t(i),\sigma_t(i+1)}$ so
        that $\sigma_{t+1} = \sigma_t$.
    \end{itemize}
\end{itemize}
}
\noindent The Markov chain $\mnn$ connects the state space, since every
permutation $\sigma$ can move to the ordered permutation $(1,2,\ldots, n)$
(and back) using the bubble sort algorithm.
Since $\mnn$ is also aperiodic, this implies that $\mnn$ is ergodic.   
For an ergodic Markov chain with transition probabilities $\prob$, if some
assignment of probabilities $\pi$ satisfies the
\emph{detailed balance condition} $\pi(\sigma)\prob(\sigma,\tau) =
\pi(\tau)\prob(\tau,\sigma)$ for every $\sigma,\tau \in \Omega$, then $\pi$ is
the stationary distribution of the Markov chain \cite{LPW06}.
It is easy to see that for $\mnn$, the distribution
$\pi(\sigma) =\prod_{(i < j)} p_{\sigma(i),\sigma(j)}/Z$,
where $Z$ is the normalizing constant $\sum_{\sigma \in \Omega}
\prod_{(i<j)} p_{\sigma(i),\sigma(j)}$, satisfies detailed balance,
and is thus the stationary distribution.  

The Markov chain $\mt$ can make any transposition at each
step, while maintaining the stationary distribution $\pi$. The
transition probabilities of $\mt$ can be quite complicated, since swapping
two distant elements in the permutation consists of many
transitions of $\mnn$, each with different probabilities. 
In the following sections, we will introduce two other Markov chains whose
transitions are a subset of those of $\mt$ for which we can describe the
transition probabilities succinctly.

\subsection{Convergence rates of Markov chains}
Next, we present some background on Markov chains. 
The \emph{total variation distance} between the stationary
distribution $\pi$ and the distribution of the Markov Chain at time $t$ is
$\|\prob^t,\pi\| _{tv} = \max_{x\in\Omega}\frac{1}{2}\sum_{y\in\Omega}
  |\prob^t(x,y)-\pi(y)|,$
where $\prob^t(x,y)$ is the $t$-step transition probability.  The efficiency
of a Markov chain $\m$ is often measured by its \emph{mixing time}
$\tau(\epsilon)$.  For all $\epsilon>0$, we define $\tau(\epsilon)=\min \{t:
\|\prob^{t'},\pi \|_{tv}\leq \epsilon, \forall t' \geq t\}.$  We say that a
Markov chain is \emph{rapidly mixing} if there exists a polynomial $p$ such
that $\tau_{\epsilon} = O(p(n,\log(\varepsilon^{-1})))$ where $n$ is the size
of each configuration in $\Omega$.

In Section~\ref{inversionSec}, we will use a standard technique called
\emph{coupling}.  A coupling is a Markov chain $(X_t,Y_t)_{t=0}^\infty$ on
$\Omega\times \Omega$ such that each of the processes $X_t$ and $Y_t$ is a
faithful coupling of $\m$, and if $X_t=Y_t$, then $X_{t+1}=Y_{t+1}$.  Given
such a coupling, define the \emph{coupling time} $T$ as follows:
$$T=\max_{x,y}E[\min\{t: X_t = Y_t | X_0=x,Y_0=y\}].$$ Then the following
theorem (see, e.g.~\cite{ald}) relates the coupling time and the mixing time.
\begin{theorem}\label{coupling}
  $\tau(\epsilon)\leq Te\lceil \ln\epsilon^{-1}\rceil.$
\end{theorem}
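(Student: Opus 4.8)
The approach is the classical two-stage coupling argument: first the \emph{coupling inequality}, which bounds the total variation distance at time $t$ by the probability that the two coupled copies have not yet coalesced, and then an amplification of the bound $T$ on the \emph{expected} coalescence time into an exponentially decaying tail bound, via Markov's inequality together with a restart argument.

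For the coupling inequality, I would fix a starting state $x \in \Omega$ and run the coupling $(X_t, Y_t)$ with $X_0 = x$ and $Y_0$ distributed as $\pi$; since each coordinate is a faithful copy of $\mathcal{M}$, the marginal law of $X_t$ is $\mathcal{P}^t(x,\cdot)$ while $Y_t$ stays distributed as $\pi$ (using that $\pi$ is stationary). For any event $A \subseteq \Omega$, write $\mathcal{P}^t(x,A) - \pi(A) = \mathbb{P}(X_t \in A) - \mathbb{P}(Y_t \in A)$ and split according to whether $X_t = Y_t$; the terms on which $X_t = Y_t$ cancel, leaving $\mathcal{P}^t(x,A) - \pi(A) \le \mathbb{P}(X_t \neq Y_t \mid X_0 = x)$. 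Taking the supremum over $A$ turns the left side into $\tfrac12 \sum_y |\mathcal{P}^t(x,y) - \pi(y)|$, and maximizing over $x$ (and, for the estimate below, over the second start state) gives $\|\mathcal{P}^t, \pi\|_{tv} \le \max_{x,y} \mathbb{P}(\tau_c > t \mid X_0 = x, Y_0 = y)$, where $\tau_c = \min\{t : X_t = Y_t\}$ is the coalescence time; here I use the property, built into the definition of the coupling, that $X_s = Y_s$ for every $s \ge \tau_c$.

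For the amplification, the definition of $T$ says exactly that $\mathbb{E}[\tau_c \mid X_0 = x, Y_0 = y] \le T$ for every pair $(x,y)$, so Markov's inequality gives $\mathbb{P}(\tau_c > \lceil eT \rceil \mid X_0 = x, Y_0 = y) \le 1/e$, uniformly over starting pairs. I would then iterate block by block: on the event that coalescence has not occurred after $i$ blocks of $\lceil eT \rceil$ steps, the coupled chain sits in some state $(X', Y')$ with $X' \neq Y'$, and the Markov property of $(X_t, Y_t)$ lets me apply the uniform per-block bound afresh, so $\mathbb{P}(\tau_c > j \lceil eT \rceil) \le e^{-j}$ for all $j \ge 1$. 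Choosing $j = \lceil \ln \epsilon^{-1} \rceil$ makes the right side at most $\epsilon$; combining this with the coupling inequality and the fact that $t \mapsto \|\mathcal{P}^t, \pi\|_{tv}$ is non-increasing yields $\|\mathcal{P}^t, \pi\|_{tv} \le \epsilon$ for all $t \ge Te\lceil \ln \epsilon^{-1}\rceil$, i.e. $\tau(\epsilon) \le Te\lceil \ln \epsilon^{-1} \rceil$.

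There is no serious obstacle here — this is essentially the textbook coupling lemma — so the care needed is only in the bookkeeping: checking that the restart step uses nothing beyond the Markov property together with the fact that the per-block failure probability $1/e$ holds \emph{regardless of the state in which that block begins}, and the minor point that $eT$ need not be an integer, so the per-block length is really $\lceil eT\rceil$ and the stated bound is read with the usual rounding convention (the factor $e$, rather than a smaller constant, being exactly what lets a single application of Markov's inequality give per-block failure below $1/e$).
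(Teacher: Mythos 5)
The paper does not give its own proof of this statement; it cites it as a standard fact (``see, e.g., [ald]''), so there is no in-paper argument to compare against, and your approach --- the coupling inequality followed by a block-wise Markov/restart amplification --- is exactly the textbook route.

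There is, however, a small gap in the bookkeeping you flag but do not resolve. With blocks of length $\lceil eT\rceil$ you conclude $\|\mathcal{P}^t,\pi\|_{tv}\le\epsilon$ only for $t\ge\lceil eT\rceil\lceil\ln\epsilon^{-1}\rceil$, which can strictly exceed $Te\lceil\ln\epsilon^{-1}\rceil$ when $eT$ is not an integer; and monotonicity of total variation distance lets you propagate the bound \emph{forward} in time, not back to the smaller threshold $Te\lceil\ln\epsilon^{-1}\rceil$. The clean fix is to use blocks of length $\lfloor eT\rfloor$ rather than $\lceil eT\rceil$: since the coalescence time $\tau_c$ is integer-valued, the events $\{\tau_c>\lfloor eT\rfloor\}$ and $\{\tau_c>eT\}$ coincide, so Markov's inequality at the real threshold $eT$ gives $\mathbb{P}(\tau_c>\lfloor eT\rfloor)=\mathbb{P}(\tau_c>eT)\le T/(eT)=1/e$ uniformly over starting pairs. (Here $\lfloor eT\rfloor\ge 1$, since $T\ge 1$ whenever $|\Omega|\ge 2$, the coupling forcing $\tau_c\ge 1$ from any pair of distinct starts.) Iterating the restart argument over $j=\lceil\ln\epsilon^{-1}\rceil$ such blocks gives $\mathbb{P}(\tau_c>j\lfloor eT\rfloor)\le e^{-j}\le\epsilon$, and now $j\lfloor eT\rfloor\le jeT=Te\lceil\ln\epsilon^{-1}\rceil$, so the stated bound follows with no rounding slack. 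Everything else in your argument --- the coupling inequality via a stationary partner, the identification of $\{X_t\ne Y_t\}$ with $\{\tau_c>t\}$ using the sticking property of the coupling, and the use of the strong Markov-type restart with the uniform-over-start-pairs per-block bound --- is correct.
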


In each of Sections~\ref{inversionSec} and~\ref{treeSec}, we introduce new
Markov chains to sample from the same distribution as $\mnn$.  In order to
obtain bounds on the mixing time of $\mnn$, we will compare $\mnn$ with these
auxiliary chains in Section~\ref{comparisonSec}.  If $P$ and $P'$ are the
transition matrices of two reversible Markov chains on the same state space
$\Omega$ with the same stationary distribution $\pi$, the comparison method
(see \cite{DSC93} and \cite{RT98}) allows us to relate the mixing times of
these two chains.  Let $E(P) = \{(\sigma,\beta): P(\sigma,\beta) > 0\}$ and
$E(P') = \{(\sigma,\beta): P'(\sigma,\beta) > 0\}$ denote the sets of edges of
the two graphs, viewed as directed graphs.  For each $\sigma,\beta$ with
$P'(\sigma,\beta)>0$, define a path $\gamma_{\sigma\beta}$ using a sequence of
states $\sigma=\sigma_0,\sigma_1,\cdots,\sigma_k = \beta$ with
$P(\sigma_i,\sigma_{i+1})>0$, and let $|\gamma_{\sigma\beta}|$ denote the
length of the path.
Let $\Gamma(\upsilon,\omega) = \{(\sigma,\beta) \in E(P'): (\upsilon,\omega)
\in \gamma_{\sigma\beta}\}$ be the set of paths that use the transition
$(\upsilon,\omega)$ of $P$.  Finally, let
$\pi_*=\min_{\rho\in\Omega}\pi(\rho)$ and define $$A = \max_{(\upsilon,\omega)
\in
E(P)}\frac{1}{\pi(\upsilon)P(\upsilon,\omega)}\sum_{\Gamma(\upsilon,\omega)}|\gamma_{\sigma\beta}|\pi(\sigma)P'(\sigma,\beta)
.$$

\noindent The following formulation of the comparison method is due to Randall
and Tetali~\cite{RT98}.

\begin{theorem}
  \label{Comparison}
  With the above notation, for $0<\epsilon<1$,
  we have $\tau(\epsilon) \leq \frac{4\log(1/(\epsilon
  \pi_*))}{\log(1/2\epsilon)}A\tau'(\epsilon).$
\end{theorem}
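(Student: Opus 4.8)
The final statement is the comparison theorem of Randall and Tetali, so the plan is to reconstruct its proof: bound each chain's mixing time in terms of its spectral gap, then compare the gaps through the two Dirichlet forms using the canonical paths $\gamma_{\sigma\beta}$. For a reversible chain $P$ with eigenvalues $1 = \lambda_1 > \lambda_2 \geq \dots \geq \lambda_{|\Omega|} \geq -1$, write $\lambda_* = \max\{\lambda_2,\,\abs{\lambda_{|\Omega|}}\}$. Standard spectral estimates (see \cite{LPW06}) give an upper bound of the form $\tau(\epsilon) \leq (1-\lambda_*)^{-1}\log\bigl(1/(\epsilon\sqrt{\pi_*})\bigr)$ for $P$ and, conversely, a lower bound $\tau'(\epsilon) \geq \frac{\lambda'_*}{1-\lambda'_*}\log\frac{1}{2\epsilon}$ for $P'$. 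Hence it suffices to show that $1-\lambda_*$ for $P$ is at least a constant multiple of $(1-\lambda'_*)/A$.

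The heart of the argument is the Dirichlet form comparison $\mathcal{E}_{P'}(f,f) \leq A\,\mathcal{E}_P(f,f)$, where $\mathcal{E}_P(f,f) = \tfrac12\sum_{\sigma,\beta}(f(\sigma)-f(\beta))^2\pi(\sigma)P(\sigma,\beta)$ and $1-\lambda_2 = \min\bigl\{\mathcal{E}_P(f,f)/\mathrm{Var}_\pi(f)\bigr\}$. For each $(\sigma,\beta) \in E(P')$, telescoping $f$ along the path $\gamma_{\sigma\beta} = (\sigma_0,\dots,\sigma_k)$ gives $f(\beta)-f(\sigma) = \sum_i \bigl(f(\sigma_{i+1})-f(\sigma_i)\bigr)$, so by Cauchy--Schwarz $(f(\beta)-f(\sigma))^2 \leq \abs{\gamma_{\sigma\beta}}\sum_{(\upsilon,\omega)\in\gamma_{\sigma\beta}}(f(\omega)-f(\upsilon))^2$. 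Substituting into $\mathcal{E}_{P'}(f,f)$ and exchanging the order of summation, the coefficient of $(f(\omega)-f(\upsilon))^2$ becomes exactly $\sum_{\Gamma(\upsilon,\omega)}\abs{\gamma_{\sigma\beta}}\,\pi(\sigma)P'(\sigma,\beta)$, which is at most $A\,\pi(\upsilon)P(\upsilon,\omega)$ by the definition of $A$. This yields $\mathcal{E}_{P'}(f,f) \leq A\,\mathcal{E}_P(f,f)$ for all $f$; since $\mathrm{Var}_\pi(f)$ is identical for the two chains, $1-\lambda_2 \geq (1-\lambda'_2)/A$.

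The path argument only controls $\lambda_2$, so to reach $\lambda_*$ I would pass to a lazy version of $P$: adding holding probability $\tfrac12$ moves all eigenvalues into $[0,1]$, making $\lambda_* = \lambda_2$, at the cost of a factor $2$ in the gap. Combining this with the two spectral estimates above and simplifying (converting $\sqrt{\pi_*}$ to $\pi_*$, and the $\lambda'_*/(1-\lambda'_*)$ lower bound back to $1/(1-\lambda'_*)$) produces exactly the claimed inequality $\tau(\epsilon) \leq \frac{4\log(1/(\epsilon\pi_*))}{\log(1/2\epsilon)}\,A\,\tau'(\epsilon)$; the factor $4$ and the $\log(1/2\epsilon)$ in the denominator are precisely the residue of the laziness correction together with these bookkeeping conversions.

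I expect the main obstacle to be Step 2's bookkeeping: the double sum must be arranged so that, after Cauchy--Schwarz and the exchange of summation, the congestion term appearing on each edge $(\upsilon,\omega)$ of $P$ is \emph{literally} $\sum_{\Gamma(\upsilon,\omega)}\abs{\gamma_{\sigma\beta}}\pi(\sigma)P'(\sigma,\beta)$, so that it matches $A$ with no slack. A secondary subtlety is the treatment of the negative end of the spectrum without assuming $P'$ is lazy, which is what forces the particular constants in the statement rather than cleaner ones; getting those constants exactly right is where I would expect to spend the most care.
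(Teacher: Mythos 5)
The paper does not prove Theorem~\ref{Comparison}; it is quoted verbatim from Randall and Tetali \cite{RT98}, so there is no ``paper proof'' to compare against. Your reconstruction is the standard Diaconis--Saloff-Coste / Randall--Tetali argument, and the core of it is sound: the Cauchy--Schwarz telescoping along canonical paths and the exchange of summation do give $\mathcal{E}_{P'}(f,f)\leq A\,\mathcal{E}_P(f,f)$ with $A$ matching exactly, whence $1-\lambda_2(P)\geq (1-\lambda_2(P'))/A$, and combining with the two standard spectral estimates produces the stated inequality up to bookkeeping.

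The one place you should be more careful is the handling of the negative end of the spectrum of $P$. The Dirichlet-form comparison controls only $1-\lambda_2$, and your proposed fix is to lazify $P$. But that bounds the mixing time of $(P+I)/2$, not of $P$: if $P$ genuinely has an eigenvalue near $-1$, the original chain may mix slowly in a way that no path-congestion argument can detect, and the inequality as stated would simply be false for that $P$. The resolution in \cite{RT98} is not a post hoc lazification but an implicit hypothesis: the theorem is stated for chains with nonnegative spectrum (in practice, chains with holding probability at least $1/2$), so that $\lambda_*=\lambda_2$ from the start. Every chain to which this paper applies the theorem ($\mnn$, $\minv$, $\mtree$) is built with a hold step of probability at least $1/2$, so the hypothesis is satisfied. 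You should state that hypothesis explicitly rather than trying to repair it by lazifying after the fact; once you do, your proof goes through, with the factor $4$ absorbing the conversion from $\sqrt{\pi_*}$ to $\pi_*$ and the $\lambda'_*\geq 1/2$ bound in the denominator of the spectral lower bound for $P'$.
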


\section{Choose Your Weapon}
\label{inversionSec}
In the Choose Your Weapon class, we are given $1/2 \leq r_1,r_2,\ldots,
r_{n-1} \leq 1$ and a set ${\bf P}$ satisfying $p_{i,j}=r_i$, if $i<j$ and
$p_{i,j}=1-p_{j,i}$ if $j<i$.  We show that a new Markov chain $\minv$ is
rapidly mixing under these conditions, which will imply that $\mnn$ and $\mt$
are as well, as we show in Section~\ref{comparisonSec}.  The Markov chain
$\minv$ acts on the \emph{inversion table} of the permutation~\cite{Knuth,
Turrini}, which has an entry for each $i\in [n]$ counting the number of
inversions involving $i$; that is, the number of values $j>i$ where $j$ comes
before $i$ in the permutation (see Figure~\ref{inversiontable}).  It is easy
to see that the $i$th element of the inversion table is an integer between $0$
and $n-i$.  In fact, the function $I$ is a bijection between the set of
permutations and the set $\I$ of all possible inversion tables (all sequences
$X=(x_1,x_2,\ldots, x_n)$ where $0\leq x_i\leq n-i$ for all $i\in [n]$).  To
see this, we will construct a permutation from any inversion table $X\in I$.
Place the element $1$ in the $(x_1+1)$st position of the permutation.  Next,
there are $n-1$ slots remaining.  Among these, place the element $2$ in the
$(x_2+1)$st position remaining (ignoring the slot already filled by 1).
Continuing, after placing $i-1$ elements into the permutation, there are
$n-i+1$ slots remaining, and we place the element $i$ into the $(x_i+1)$st
position among the remaining slots.  This proves that $I$ is a bijection from
$S_n$ to $\I$.

\begin{figure}
  \center{$ \begin{array}{cccccccccc}
    \sigma &=& 8 & 1 & 5 & 3 &7 & 4 &6 &2 \\ I(\sigma)&=&1 & 7 & 2 & 3 & 1 & 2
    &1 &0 \end{array} $
  } \caption{The inversion table for a permutation.}
  \label{inversiontable}
\end{figure}

Given this bijection, a natural algorithm for sampling permutations is to
perform the following local Markov chain on inversion tables: select a
position $i\in [n]$ and attempt to either add one or subtract one from $x_i$,
according to the appropriate probabilities.  In terms of permutations, this
amounts to adding or removing an inversion involving $i$ without affecting the
number of inversions involving any other integer, and is achieved by swapping
the element $i$ with an element $j>i$ such that every element in between is
smaller than both $i$ and $j$.  If $i$ moves ahead of $j$, this move happens
with probability $p_{i,j}$ because for each $k$ that $i$ and $j$ are swapped
past, $k<i,j$, so $p_{k,i}=r_k = p_{k,j}$ (since each of these depend only on
$k$) so the net effect on the distribution is neutral, and the detailed
balance condition ensures that $\pi$ is the correct stationary distribution.
Formally, the Markov chain is defined as follows.

\vspace{.1in} \noindent  {\bf The Inversion Markov chain $\minv$ } 

\vspace{.05in} \noindent {\tt Starting at any permutation $\sigma_0$, iterate
the following:

\begin{itemize}
  \item  Select an element $i\in [n]$ with probability $(n-i)/\binom{n}{2}$
    and a bit $b\in \{-1,+1\}$.
    \begin{itemize}
      \item If $b=+1$, let $j$ be the first element after element $i$ in
        $\sigma_t$ such that $j>i$.  With prob. $p_{j,i}/2=(1-r_i)/2$, obtain
        $\sigma_{t+1}$ from $\sigma_t$ by swapping $i$ and $j$.  
      \item If $b=-1$, let $j$ be the last element before element $i$ in
        $\sigma_t$ such that $j>i$.  With prob. $p_{i,j}/2=r_i/2$, obtain
        $\sigma_{t+1}$ from $\sigma_t$ by swapping $i$ and $j$.  
    \end{itemize}
  \item With prob. $1/2$, $\sigma_{t+1}=\sigma_t$.
\end{itemize}
}
\noindent 
This Markov chain contains the moves of $\mnn$ (and therefore also connects
the state space).  Although elements can jump across several elements, it is
still fairly local compared with the general transposition chain $\mt$ which
has $\binom{n}{2}$ choices at every step, since $\minv$ has at most $2n$.

The Markov chain $\minv$ is essentially a product of $n$ independent
one-dimensional processes.  The $i$th process is just a random walk bounded
between 0 and $n-i$, which moves up with probability $1-r_i$ and down with
probability $r_i$; hence its mixing time is $O(n^2)$, unless $r_i$ is bounded
away from $1/2$, in which case its mixing time is $O(n)$.   However, each
process is slowed down by a factor of $n$ since we only update one process at
each step.  To make this argument formal, we will use
Theorem~\ref{crossproductthm}, which bounds the mixing time of a product of
independent Markov chains and whose elementary proof is deferred to
Section~\ref{productsection}.  

\begin{theorem}\label{BiasedInversions} Let $1/2\leq r_1, r_2,\ldots,
  r_{n-1}< 1$ be constants, and let $r_{max}=\max_i r_i$.  Assume that
  $p_{i,j}=r_{\min\{i,j\}}.$  
  \begin{enumerate}
    \item If each $r_i>1/2$ then the mixing time of $\minv$ on biased
      permutations with these $p_{i,j}$ values is $O(n^2\ln(n/\epsilon))$. 
    \item Otherwise,  the mixing time of $\minv$  is $O(n^3\ln(n/\epsilon))$. 
  \end{enumerate}
\end{theorem}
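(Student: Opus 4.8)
The plan is to show that, transported through the inversion-table bijection $I\colon S_n\to\I$, the chain $\minv$ is literally a product of $n-1$ independent one-dimensional lazy biased walks, and then to feed the mixing times of those walks into Theorem~\ref{crossproductthm}.

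First I would make the decomposition precise. Every non-null move of $\minv$ swaps an element $i$ with the nearest element $j>i$ lying on a prescribed side of $i$; since every element strictly between $i$ and $j$ is then smaller than both $i$ and $j$, a short case analysis shows that such a swap changes the inversion count of $i$ by exactly $\pm1$ and leaves the inversion count of every other element unchanged. Furthermore, the required $j$ fails to exist precisely when $x_i$ already equals $0$ (for a decrement) or $n-i$ (for an increment), in which case the move does nothing. Hence, via $I$, $\minv$ is the following product chain on $\prod_{i=1}^{n-1}\{0,1,\dots,n-i\}$: pick a coordinate $i$ with probability $(n-i)/\binom n2$, then take one lazy step of the nearest-neighbor walk $\mathcal R_i$ on $\{0,\dots,n-i\}$ that moves toward $0$ with probability proportional to $r_i$, away from $0$ with probability proportional to $1-r_i$, and reflects at both endpoints. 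As a cross-check, regrouping $\pi(\sigma)=\prod_{k<l}p_{\sigma(k),\sigma(l)}/Z$ by the smaller element of each pair shows $\pi\circ I^{-1}$ factors as $\prod_{i}\pi_i$ with $\pi_i(x)\propto((1-r_i)/r_i)^x$, which is exactly the (truncated geometric) stationary distribution of $\mathcal R_i$; so the factors are the right chains and Theorem~\ref{crossproductthm} applies.

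Next I would bound the mixing time $\tau_{\mathcal R_i}(\epsilon)$ of a single factor, a lazy birth--death chain on a path with $L_i:=n-i\le n$ edges. Use the one-dimensional coupling in which both copies attempt the same increment: the gap $|X_t-Y_t|$ is nonincreasing and strictly contracts whenever exactly one copy sits at a reflecting endpoint. When $r_i=1/2$ this is the symmetric lazy walk on the path, and the coupling together with Theorem~\ref{coupling} gives $\tau_{\mathcal R_i}(\epsilon)=O(L_i^2\ln(1/\epsilon))=O(n^2\ln(1/\epsilon))$. When $r_i$ is a constant strictly larger than $1/2$ (hence bounded away from $1/2$, which is the situation throughout part~(1) since the $r_i$ are fixed constants), the $\Omega(1)$ downward drift makes the lower copy reach $0$ in $O(L_i)$ expected steps after which the gap is absorbed within another $O(L_i)$ steps, yielding $\tau_{\mathcal R_i}(\epsilon)=O(L_i\ln(1/\epsilon))=O(n\ln(1/\epsilon))$.

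Finally I would assemble the bound via Theorem~\ref{crossproductthm}. Coordinate $i$ is refreshed with probability $p_i=(n-i)/\binom n2$, so it is slowed by a factor $1/p_i=\binom n2/(n-i)$, and the theorem contributes an extra $\ln n$ factor from a union bound over the $n-1$ coordinates. In the setting of part~(1) this gives a mixing time of order $\max_i \tau_{\mathcal R_i}(\epsilon/n)/p_i=\max_i\big(O(L_i\ln(n/\epsilon))\cdot\binom n2/L_i\big)=O(n^2\ln(n/\epsilon))$; in the general setting of part~(2) the bottleneck is a coordinate with $r_i=1/2$, giving $\max_i\big(O(L_i^2\ln(n/\epsilon))\cdot\binom n2/L_i\big)=O(L_in^2\ln(n/\epsilon))=O(n^3\ln(n/\epsilon))$. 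The weights $(n-i)/\binom n2$ are precisely what make these bounds come out cleanly, by refreshing longer coordinate-walks proportionally more often. The only place that needs real care is the first step --- verifying that a swap of $i$ past the chosen $j$ leaves all other inversion-table entries fixed and that the no-valid-$j$ cases reproduce exactly the reflecting boundary of $\mathcal R_i$, so that $\minv$ really is a product chain in the sense demanded by Theorem~\ref{crossproductthm}; after that, the single-coordinate estimate is a standard coupling argument and the product step is a black-box application of the theorem.
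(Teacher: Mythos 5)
Your proof is correct and follows essentially the same route as the paper: decompose $\minv$ via the inversion-table bijection into $n-1$ independent lazy biased walks on paths, bound each factor's mixing time by a monotone coupling (you give a drift/two-stage argument where the paper uses an explicit optional-stopping martingale for the biased case and cites a known lemma for the unbiased one, but these are interchangeable), and then assemble via Theorem~\ref{crossproductthm} using the selection probabilities $(n-i)/\binom{n}{2}$.
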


To prove this theorem, we need to analyze the one-dimensional process
$\m(r,k)$, bounded between $0$ and $k$, which chooses to move up with
probability $r\geq 1/2$ and down with probability $1-r$ at each step, if
possible.  This simple random walk is well-studied; we include the proof for
completeness. 

\begin{lemma}\label{BiasedInversionsOned}
  Let $1/2\leq r\leq 1$ be constant.  Then the Markov chain $\m(r,k)$ has
  mixing time 
  \begin{enumerate}
    \item $\tau(\epsilon)=O(k\ln\epsilon^{-1})$ if $r$ is a constant bigger than
      $1/2$, and
    \item $\tau(\epsilon)=O(k^2\ln\epsilon^{-1})$ if $r=1/2$.
  \end{enumerate}
\end{lemma}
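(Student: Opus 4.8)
The plan is to handle the two cases with two different coupling arguments, both monotone couplings on the linearly ordered state space $\{0,1,\dots,k\}$. First I set up the standard monotone coupling: given two copies $X_t \le Y_t$, use the same coin at each step to decide the $\pm 1$ direction, so that the order $X_t \le Y_t$ is preserved and the two copies differ by a quantity $D_t = Y_t - X_t$ that can only change when exactly one of the walks is blocked at a boundary. Since the walks move together whenever both are in the interior, the gap $D_t$ is nonincreasing and strictly decreases only at the rare steps where $Y_t = k$ and $X_t < k$ (gap shrinks by one) or $X_t = 0$ and $Y_t > 0$ (gap shrinks by one). By Theorem~\ref{coupling} it suffices to bound $\E[T]$ where $T$ is the time for $D_t$ to hit $0$ from its worst-case start $D_0 = k$.

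For part (2), $r = 1/2$, the coupled walks are symmetric, so I would instead use a slightly different (also classical) coupling where the two copies take opposite steps until they meet: then the gap $D_t$ performs a lazy symmetric random walk on $\{0,1,\dots,2k\}$ (reflected suitably at the boundary, or one can just track a single walk's distance to a target), absorbed at $0$. The expected absorption time of such a walk started at distance $k$ is $O(k^2)$; this is the standard gambler's-ruin / hitting-time computation for the simple symmetric walk on an interval of length $O(k)$, giving $\E[T] = O(k^2)$ and hence $\tau(\epsilon) = O(k^2 \ln \epsilon^{-1})$.

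For part (1), $r$ a constant strictly greater than $1/2$, the key point is that the upper walk $Y_t$ has a strong downward drift toward $0$ (it moves down with probability $r > 1/2$), and under the monotone coupling the gap $D_t = Y_t - X_t$ is dominated by $Y_t$ itself, so once $Y_t$ reaches $0$ the two copies have coalesced. So it suffices to bound the expected time for a single copy of $\m(r,k)$, started at $k$, to reach $0$. Because the drift $2r-1$ is a positive constant, a one-line drift/martingale argument (e.g. the function $x \mapsto x$ as a supermartingale-with-drift, or comparing with a biased random walk on $\Z_{\ge 0}$ whose expected hitting time of $0$ from $k$ is $k/(2r-1)$) shows this expected hitting time is $O(k)$. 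Then $\E[T] = O(k)$ and Theorem~\ref{coupling} gives $\tau(\epsilon) = O(k \ln \epsilon^{-1})$.

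The only mildly delicate point — the main obstacle — is making the coupling argument clean at the boundaries: in the $r>1/2$ case one must check that the monotone coupling never lets the gap increase (it does not, since both walks use the same direction coin and the constraint "do nothing if the move would leave $\{0,\dots,k\}$" can only pull them closer), and in the $r=1/2$ case one must choose the coupling so that $D_t$ genuinely behaves like an unbiased walk on an interval rather than getting stuck. Both are routine once the right coupling is written down, and neither requires more than the gambler's-ruin estimate and an elementary drift bound, so I will state these computations briefly and cite the standard facts.
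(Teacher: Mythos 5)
Your part (1) argument matches the paper's in all essentials: a monotone (same-coin) coupling reduces the coupling time to the one-sided hitting time of the extreme state for a single biased walk, and a drift/optional-stopping martingale bound gives $E[T]\le k/(2r-1)$. The paper does exactly this, with the martingale $W(t)=k-Z(t)+(2r-1)t$.

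For part (2), the paper keeps the \emph{same} trivial coupling and bounds the hitting time of $k$ from $0$ by a second-moment martingale argument (citing a standard lemma), getting $E[T]\le k^2$. Your primary proposal --- an anti-correlated coupling in which the two copies step in opposite directions so that the gap $D_t=Y_t-X_t$ is an unbiased walk absorbed at $0$ --- is less routine than you suggest. That coupling is not monotone: in the interior the gap changes by $\pm 2$, so from an odd gap the two copies can \emph{cross} (gap $1\to -1$) instead of meeting, while boundary steps change the gap by $\pm 1$, so parity is not preserved either; moreover $D_t$ ranges over $[-k,k]$, not $[0,2k]$, so the statement that $D_t$ is a lazy symmetric walk on $\{0,\dots,2k\}$ is not correct as written. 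You would have to patch the coupling near coalescence (e.g.\ switch to the trivial coupling once $|D_t|\le 1$) and handle both boundaries, and in the end this buys nothing over the cleaner alternative you mention only parenthetically --- tracking a single walk's hitting time of the far endpoint under the trivial coupling --- which is precisely what the paper does. I would lead with that and drop the anti-correlated coupling.
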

\begin{proof}
  We use a variation on coupling.  We use the trivial coupling, which chooses
  to move the same direction in each Markov chain.  Notice that the Markov
  chain $\m(r,k)$ is monotone with respect to this coupling, in the sense that
  if $X_t$ is below $Y_t$, then it will remain so until $X_{t'}=Y_{t'}$.  Thus
  the time until the chains couple is bounded by the time it takes for a
  process $Z_t$, where $Z_0=0$, to reach height $k$.  However, $Z_t$ is just a
  biased random walk bounded between $0$ and $k$.   First, we notice that
  $Z_t$ is non-decreasing in expectation; that is, for all $t>0$,
  $E[Z_{t+1}-Z_{t}]\geq 0$:
  \begin{eqnarray*}
    E[Z_{t+1}-Z_{t}]&=& r - (1-r) = 2r-1\geq 0.
  \end{eqnarray*}
 
  Consider the case that $r>1/2$.  Define $W(t)=k-Z(t) +(2r-1)t$.  Examining
  the expected difference between $W(t)$ and $W(t+1)$, we see $$E[W(t+1)-W(t)]
  = E[-Z(t+1)+2r-1+Z(t)] = 0.$$ Also, since the differences $W(t+1) - W(t)$ are
  bounded, $\{W (t)\}$ is a martingale.  The time $T=\min\{t: Z_t=0\}$ is a
  stopping time for the process $W(t)$, so we may apply the Optional Stopping
  Theorem for martingales to deduce that $$E[W(T)]=W(0)=k.$$ However, since
  $$E[W(T)]=E[k-Z(T) + (2r-1)T]= (2r-1)E[T],$$ it follows that $E[T]=k/(2r-1).$
  Recall from Theorem~\ref{coupling} that 
  $$\tau(\epsilon)=O(T\ln\epsilon^{-1}) = O(k/(2r-1)\ln\epsilon^{-1})=O(k
  \ln\epsilon^{-1}).$$

  Suppose now that $r=1/2$. 
  This case is similar, and follows from Lemma~6 of~\cite{lrs}.   Notice
  $E[(Z(t+1)- Z(t))^2] = r +(1-r) = 1.$
  Therefore
  $E[T]\leq k (2k-k)/1= k^2.$  Hence 
  $\tau(\epsilon)= O(k^2 \ln\epsilon^{-1}).$

\end{proof}

Finally, we can use these bounds to prove Theorem~\ref{BiasedInversions}.  

\vspace{.2in}
\noindent {\em Proof of Theorem~\ref{BiasedInversions}.}  \ 
The $i$th process is chosen with probability $(n-i)/(2\binom{n}{2})$.
Therefore, by Theorem~\ref{crossproductthm}, the mixing time of $\minv$
satisfies 
$$\tau(\epsilon)\leq \frac{\binom{n}{2}}{n-i} (n-i)\ln(2n/\epsilon) = \binom{n}{2}\ln(2n/\epsilon)=O(n^2\ln(n/\epsilon))$$
when each $r_i$ is bounded away from $1/2$. Otherwise, 
$$\tau(\epsilon)\leq \frac{\binom{n}{2}}{n-i} (n-i)^2\ln(2n/\epsilon) =O(n^3\ln(n/\epsilon)).$$
\qed

\begin{Remark}
The same proof also applies to the case where the probability of swapping $i$
and $j$ depends on the object with lower rank (i.e., we are given $r_2, \dots
r_n$ and we let $p_{i,j}=r_j$ for all $i<j$).  This case is related to a
variant of the MA1 list update algorithm, where if a record is requested, we
try to move the associated record $x$ ahead of its immediate predecessor in
the list, if it exists.  If it has higher rank than its predecessor, then it
always succeeds, while if its rank is lower we move it ahead with probability
$f_x =r_x/ (1 + r_x)\leq 1$.  
\end{Remark}

\section{League Hierarchy}
\label{treeSec}

In this section, we turn to a second class of ${\bf P}$ that have what we call
\emph{league structure}.  Let $T$ be a proper rooted binary tree with $n$ leaf
nodes, labeled $1, \ldots, n$ in sorted order. Each non-leaf node $v$ of this
tree is labeled with a value $\half \leq q_v \leq 1$. For $i,j\in [n]$, let $i
\vee j$ be the lowest common ancestor of the leaves labeled $i$ and $j$.  We
say that ${\bf P}$ \emph{has league structure $T$} if for all $i<j$, $p_{i,j}
= q_{i \vee_T j}$ and $p_{j,i}=1-p_{i,j}$. For example,
Figure~\ref{figtreepij} shows a set ${\bf P}$ such that $p_{14} = .8$, $p_{49}
= .9$, and $p_{58} = .7$. 

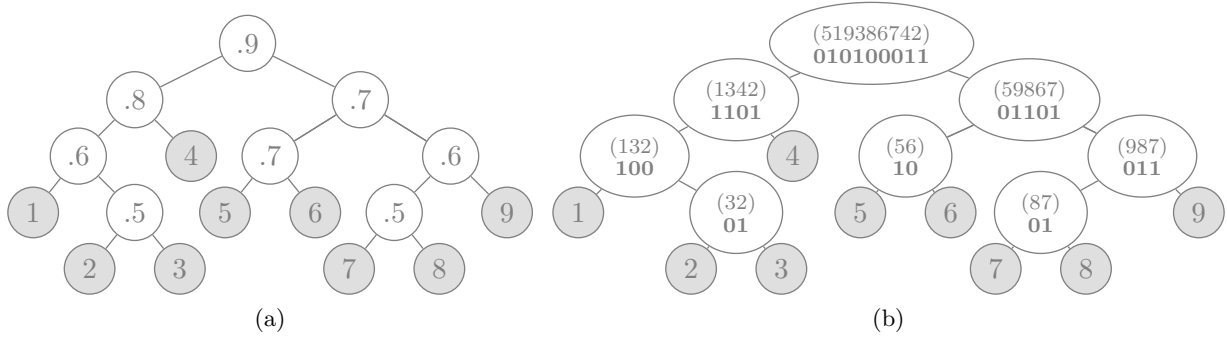
\begin{figure}[h]
  \centering
    \subfloat[]{
    \begin{tikzpicture}[scale=.15]
      \tikzstyle{node}=[circle,edge=black!100]
      \tikzstyle{fnode}=[circle,edge=black!100,fill=lightgray!50]
      \tikzstyle{edge}=[draw,line width = .5pt,gray]
      \node[node] (v) at (30,20) {$.9$};
      \node[node] (v1) at (20,15) {$.8$};
      \node[node] (v2) at (40,15) {$.7$};
      \node[node] (v11) at (15,10) {$.6$};
      \node[fnode] (v12) at (25,10) {$4$};
      \node[node] (v21) at (32,10) {$.7$};
      \node[node] (v22) at (48,10) {$.6$};
      \node[fnode] (v111) at (11,5) {$1$};
      \node[node] (v112) at (20,5) {$.5$};
      \node[fnode] (v211) at (28,5) {$5$};
      \node[fnode] (v212) at (36,5) {$6$};
      \node[node] (v221) at (43,5) {$.5$};
      \node[fnode] (v222) at (53,5) {$9$};
      \node[fnode] (v1121) at (16,0) {$2$};
      \node[fnode] (v1122) at (24,0) {$3$};
      \node[fnode] (v2211) at (39,0) {$7$};
      \node[fnode] (v2212) at (47,0) {$8$};
      \foreach \t in {v,v1,v2,v2,v11,v112,v21,v22,v221} {
        \draw[edge] (\t) -- (\t1);
        \draw[edge] (\t) -- (\t2);
      }
    \end{tikzpicture}
      \label{figtreepij}
    }
    \subfloat[]{
    \begin{tikzpicture}[scale=.15]
      \tikzstyle{node}=[ellipse,edge=black!100]
      \tikzstyle{fnode}=[circle,edge=black!100,fill=lightgray!50]
      \tikzstyle{edge}=[draw,line width = .5pt,gray]
      \node[node] (v) at (30,20) {$(519386742) \atop \bf 010100011$};
      \node[node] (v1) at (18,15) {$(1342) \atop \bf 1101$};
      \node[node] (v2) at (44,15) {$(59867) \atop \bf 01101$};
      \node[node] (v11) at (9,10) {$(132) \atop \bf 100$};
      \node[fnode] (v12) at (23,10) {$4$};
      \node[node] (v21) at (33,10) {$(56) \atop \bf 10$};
      \node[node] (v22) at (54,10) {$(987) \atop \bf 011$};
      \node[fnode] (v111) at (4,5) {$1$};
      \node[node] (v112) at (18,5) {$(32) \atop \bf 01$};
      \node[fnode] (v211) at (29,5) {$5$};
      \node[fnode] (v212) at (37,5) {$6$};
      \node[node] (v221) at (45,5) {$(87) \atop \bf 01$};
      \node[fnode] (v222) at (59,5) {$9$};
      \node[fnode] (v1121) at (14,0) {$2$};
      \node[fnode] (v1122) at (22,0) {$3$};
      \node[fnode] (v2211) at (41,0) {$7$};
      \node[fnode] (v2212) at (49,0) {$8$};
      \foreach \t in {v,v1,v2,v2,v11,v112,v21,v22,v221} {
        \draw[edge] (\t) -- (\t1);
        \draw[edge] (\t) -- (\t2);
      }
    \end{tikzpicture}
    \label{figtreeenc}
    }
    \caption{
    \small
    A set ${\bf P}$ with league structure, and the corresponding
    tree-encoding of the permutation 519386742.
    }
    \label{figtree}
\end{figure}

When $T$ is a complete binary tree and $q_{v_1}=q_{v_2}$ for each $v_1$ and
$v_2$ on the same level of the tree, this is precisely the representation of
the winning probabilities for a tournament described in the introduction.  We
define the Markov chain $\mtree$ over permutations, given a set ${\bf P}$ with
league structure $T$.

\vspace{.1in}
\noindent {\bf The Markov chain $\mtree$ } 

\vspace{.05in}
\noindent {\tt Starting at any permutation $\sigma_0$, iterate the following:

\begin{itemize}
  \item Select distinct $a,b \in [n]$ u.a.r.
    Assume $a < b$.
  \item If every number between $a$ and $b$ in the permutation $\sigma_t$
    is not a descendant in $T$ of $a\vee_T b$, obtain $\sigma_{t+1}$ from
    $\sigma_t$ by placing $a,b$ in order with probability
    $p_{a,b}$, and out of order with probability $1 - p_{a,b}$, leaving all
    elements between them fixed.
  \item Otherwise, $\sigma_{t+1}=\sigma_t$.
\end{itemize}
}

First, we show that this Markov chain samples from the same distribution as
$\mnn$. Swapping arbitrary non-adjacent elements $a$ and $b$ could
potentially change the weight of the permutation dramatically.
However, for any element $c$ that is not a descendant in $T$ of
$a\vee_T b$, the relationship between $a$ and $c$ is the same as the
relationship between $b$ and $c$. Thus the league structure ensures that
swapping $a$ and $b$ only changes the weight by a multiplicative factor
of $\lambda_{a,b}=p_{a,b}/p_{b,a}$.  
\begin{lemma}
  \label{mtreedistribution}
  The Markov chain $\mtree$ has the same stationary distribution as $\mnn$.
\end{lemma}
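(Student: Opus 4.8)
The plan is to check that $\mtree$ is reversible with respect to the distribution $\pi(\sigma)=\prod_{i<j}p_{\sigma(i),\sigma(j)}/Z$ that was already shown to be stationary for $\mnn$. Since every nearest‑neighbor transposition is a legal move of $\mtree$ (when $a,b$ occupy adjacent positions there are no elements strictly between them, so the defining condition on $a\vee_T b$ is vacuous), $\mtree$ inherits irreducibility from $\mnn$, and it is aperiodic (e.g.\ the identity permutation has a self‑loop); hence it has a unique stationary distribution and it suffices to verify the detailed balance equation $\pi(\sigma)P(\sigma,\tau)=\pi(\tau)P(\tau,\sigma)$ for all $\sigma,\tau$. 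This is trivial unless $\tau$ is obtained from $\sigma$ by a legal swap of some pair $\{a,b\}$ with $a<b$, so fix such a pair and assume without loss of generality that $a$ precedes $b$ in $\sigma$, say at positions $i<j$. Because the multiset of elements lying strictly between positions $i$ and $j$ is the same in $\sigma$ and in $\tau$, the swap is legal in both directions and $P(\sigma,\tau)=p_{b,a}/\binom n2$ (placing $a,b$ out of order), $P(\tau,\sigma)=p_{a,b}/\binom n2$; so detailed balance reduces to the single identity $\pi(\tau)/\pi(\sigma)=p_{b,a}/p_{a,b}$.

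To establish this I would expand both weights as products over ordered position pairs and cancel. Pairs of positions avoiding $\{i,j\}$ are untouched. A position $k<i$ (symmetrically $k>j$) contributes, through its two pairs with positions $i$ and $j$, the factor $p_{\sigma(k),a}\,p_{\sigma(k),b}$ in $\sigma$ and $p_{\sigma(k),b}\,p_{\sigma(k),a}$ in $\tau$, hence ratio $1$; the pair $(i,j)$ itself contributes the ratio $p_{b,a}/p_{a,b}$; and a position $m$ with $i<m<j$, writing $c=\sigma(m)$, contributes $p_{a,c}\,p_{c,b}$ in $\sigma$ against $p_{b,c}\,p_{c,a}$ in $\tau$. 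Thus $\pi(\tau)/\pi(\sigma)=(p_{b,a}/p_{a,b})\prod_{i<m<j}\bigl(p_{b,c_m}p_{c_m,a}\bigr)/\bigl(p_{a,c_m}p_{c_m,b}\bigr)$, and it remains to show each factor of the product equals $1$.

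The league structure enters here, and I expect this step to be the crux. Let $u=a\vee_T b$; since the leaves of $T$ carry the labels $1,\dots,n$ in left‑to‑right order, the leaves descending from $u$ form a contiguous interval of integers containing both $a$ and $b$. Legality of the move forces each $c=\sigma(m)$ with $i<m<j$ to be a non‑descendant of $u$, hence to lie outside that interval, so $c<a$ or $c>b$ --- the case $a<c<b$, which would give a nontrivial factor $q_w^2/(1-q_w)^2$, is exactly what "leaves in sorted order" rules out. In either remaining case $c$ sits on one side of the whole interval, so $c\vee_T a=c\vee_T b$ (both equal the lowest common ancestor of $c$ and $u$); plugging this into $p_{x,y}=q_{x\vee_T y}$ for $x<y$ gives $p_{a,c}=p_{b,c}$ and $p_{c,a}=p_{c,b}$, whence the factor $p_{b,c}p_{c,a}/(p_{a,c}p_{c,b})$ is $1$. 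Therefore $\pi(\tau)/\pi(\sigma)=p_{b,a}/p_{a,b}$, detailed balance holds, and $\pi$ is the stationary distribution of $\mtree$, matching that of $\mnn$. The only genuinely delicate point is the interval observation that excludes $a<c<b$; the rest is bookkeeping of the pair contributions.
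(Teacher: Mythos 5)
Your proof is correct and follows essentially the same route as the paper's: fix a legal $\mtree$ transition swapping $a<b$, reduce the claim to detailed balance, and use the league structure together with the sorted leaf labeling to show that every element $c$ lying between $a$ and $b$ in the permutation but outside the subtree of $a\vee_T b$ satisfies $c\vee_T a=c\vee_T b$, hence $p_{a,c}=p_{b,c}$ and $p_{c,a}=p_{c,b}$, so those factors cancel and $\pi(\tau)/\pi(\sigma)=p_{b,a}/p_{a,b}$. You are somewhat more explicit than the paper — you note that nearest-neighbor swaps are always legal (so irreducibility and aperiodicity carry over from $\mnn$), and you carefully track both the $(a,c)$-type and $(c,b)$-type pair contributions rather than writing the ratio in abbreviated form — but the key observation (the interval structure of descendants rules out $a<c<b$, so $c$ sits entirely on one side) is the same as the paper's remark that all integers strictly between $a$ and $b$ are descendants of $a\vee b$.
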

\begin{proof}
  Let $\pi$ be the stationary distribution of $\mnn$, and let
  $(\sigma_1,\sigma_2)$ be a transition in $\mtree$. It suffices to show that
  the detailed balance condition holds for this transition with the stationary
  distribution $\pi$.  Recall that we may express $\pi(\sigma) ={\prod_{i,j
  \vert i <_{\sigma} j} p_{i,j}}/{Z}$ where $Z = \sum_{\sigma \in \Omega}
  \prod_{i,j \vert i <_{\sigma} j} p_{i,j}$.  The transition
  $(\sigma_1,\sigma_2)$ transposes some two elements $a<_{\sigma_1}b$, where
  every element between $a$ and $b$ in $\sigma_i$ is not a descendant of
  $a\vee b$ in $T$. Let $x_1, \ldots, x_k$ be those elements.  Thus, the path
  from $a$ or $b$ to $x_i$ in $T$ must pass through $a \vee b$ and go to
  another part of the tree. For every such element $x_i$, $a \vee x_i = (a
  \vee b) \vee x_i = b \vee x_i$.  From the observation, we see from the
  league structure that $p_{ax_i} = p_{bx_i}$ for every $x_i$ between $a$ and
  $b$.  In particular, $x_i$ is either greater than both $a$ and $b$ or less
  than both $a$ and $b$, since all integers $c$ such that $a<c<b$ are
  necessarily descendants of $a \vee b$.  Therefore,
  \[\frac{\pi(\sigma_1)}{\pi(\sigma_2)} = \frac{ p_{ab}\prod_i p_{ax_i}}
  {p_{ba} \prod_i p_{bx_i}} = \frac{p_{ab}}{p_{ba}}.\] This is exactly the
  ratio of the transition probabilities in $\mtree$, thus $\mtree$ also has
  stationary distribution $\pi$.
\end{proof}

The key to the proof that $\mtree$ is rapidly mixing is again to decompose the
chain into $n-1$ independent Markov chains, $\m_1,\m_2,\ldots, \m_{n-1}$, one
for each non-leaf node of the tree $T$. To this end, we introduce an alternate
representation of a permutation as a set of binary strings arranged like the
tree $T$.  For each non-leaf node $v$ in the tree $T$, let $L(v)$ be its left
descendants, and $R(v)$ be its right descendants. We now do the following:
Given the permutation $\sigma$, list each descendant $x$ of $v$ in the order
we encounter it in $\sigma$; these are parenthesized in
Figure~$\ref{figtreeenc}$.  Then for each listed element $x$, write a $1$ if
$x \in L(v)$ and a $0$ if $x \in R(v)$. This is the final binary encoding in
Figure~$\ref{figtreeenc}$.  We see that any $\sigma$ will lead to an
assignment of binary strings at each non-leaf node $v$ with $L(v)$ ones and
$R(v)$ zeroes.  Next we verify that this is a bijection between the set of
permutations and the set of assignments of such binary strings to the tree
$T$.  Given any such assignment of binary strings, we can recursively
reconstruct the permutation $\sigma$ as follows. For each leaf node $i$, let
its string be the string $``i"$.  For any node $n$ with binary string $b$,
determine the strings of its two children. Call these $s_1,s_0$.  Interleave
the elements of $s_1$ with $s_0$, choosing an element of $s_1$ for each $1$ in
$b$, and an element of $s_0$ for each $0$.  This yields a permutation
$\sigma$.
 
With this bijection, we first analyze $\mtree$'s behavior over tree
representations and later extend this analysis to permutations.  The Markov
chain $\mtree$, when proposing a swap of the elements $a$ and $b$, will only
attempt to swap them if $a,b$ correspond to some \emph{adjacent} 0 and 1 in
the string associated with $a \vee b$.  Swapping $a$ and $b$ does not affect
any other string, so each non-leaf node $v$ represents an independent
exclusion process with $L(v)$ ones and $R(v)$ zeroes.  These exclusion
processes have been well-studied~\cite{BD98,wilson,BBHM05,GPR09}.  We will use
the following bounds on the mixing times of the symmetric and asymmetric
simple exclusion processes.
  
\begin{theorem}
  \label{exclusion}
  Let $\m$ be the exclusion process with parameter $p$ on $k_1$ ones and
  $k_2$ zeroes, where $k=k_1+k_2$.  Then
  \begin{enumerate} 
    \item if $p=1/2$, $\tau(\epsilon)=O(k^3\log (k_1k_2/\epsilon))$.
      \cite{BD98, wilson}
    \item if $p>1/2$, then $\tau(\epsilon)=O(k(\min\{k_1,k_2\} + \log
      k)\log(\epsilon^{-1}))=O(k^2\log(\epsilon^{-1}))$. \cite{GPR09}
  \end{enumerate} 
\end{theorem}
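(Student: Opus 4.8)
Both bounds are established in the cited works, so the shortest route is to quote them directly; here I indicate how they are obtained. The common starting point is the observation that the exclusion process with parameter $p$ on $k_1$ ones and $k_2$ zeroes is a \emph{lumping} of the nearest-neighbor transposition chain $\mnn$ on $S_k$ in which $p_{i,j}=p$ for all $i<j$: map a permutation of $[k]$ to the $\{0,1\}$-string recording, position by position, whether the entry is one of the $k_1$ smallest values. This partition of $S_k$ is (strongly) lumpable, and the induced chain is exactly the stated exclusion process; since collapsing states under a deterministic map never increases total variation distance, the mixing time of the exclusion process is at most that of $\mnn$ on $S_k$ with constant bias $p$. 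For $p=1/2$ this is Wilson's $O(k^3\log k)$ bound \cite{wilson}, and for $p>1/2$ it is the $\Theta(k^2)$ bound of Benjamini et al.\ \cite{BBHM05}.

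To obtain the sharper dependence stated in the theorem --- the $\log(k_1k_2)$ factor in part (1) and the $\min\{k_1,k_2\}+\log k$ factor in part (2) --- one analyzes the exclusion chain directly. For part (1) I would use the path-coupling/eigenfunction analysis of \cite{BD98, wilson}: equip the set of configurations with a metric of diameter $O(k_1k_2)$, for instance the least number of adjacent $1$--$0$ swaps needed to pass from one configuration to another; couple two copies by selecting the same edge and the same coin in both; check that this distance is a supermartingale under the coupling; and extract a $\bigl(1-\Theta(k^{-2})\bigr)$ per-step contraction by re-weighting positions with the discrete sine (the slowest eigenfunction of the path), after which the coupling lemma (Theorem~\ref{coupling}) gives $\tau(\epsilon)=O\!\bigl(k^3\log(k_1k_2/\epsilon)\bigr)$. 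For part (2) I would follow Greenberg et al.\ \cite{GPR09}: pass to the monotone staircase-walk (height-function) representation, run the monotone grand coupling so that domination of height functions is preserved, and track the region on which two coupled copies disagree. Because $p>1/2$ the boundary of this region feels an inward drift of order $1$, so a one-dimensional biased-walk estimate in the spirit of Lemma~\ref{BiasedInversionsOned} shows the at most $\min\{k_1,k_2\}$ disagreements are eliminated within $O\!\bigl(k(\min\{k_1,k_2\}+\log k)\bigr)$ steps.

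The main obstacle in each case is the quantitative bookkeeping rather than the qualitative picture. In part (1) the naive metric only yields $E[\Delta d]=0$ --- adjacent-transposition dynamics are non-contracting on the nose --- so getting the correct exponent $k^3$, rather than something a couple of powers larger, relies on Wilson's precise eigenfunction weighting and parity analysis. In part (2) the delicate point is the crossover between the drift-dominated regime, where many disagreements remain and the bias drives them together quickly, and the diffusive end-game with only $O(1)$ disagreements left, where fluctuations dominate the drift; this crossover is exactly what produces the additive $\log k$ term. Since all of this is carried out carefully in \cite{BD98, wilson, GPR09}, we simply take the resulting bounds as given.
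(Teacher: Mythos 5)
Your proposal is correct and takes essentially the same approach as the paper: Theorem~\ref{exclusion} is stated there without proof and simply cited to \cite{BD98, wilson} for part (1) and \cite{GPR09} for part (2), which is exactly what you do. Your additional observations — that the exclusion process arises as a lumping of $\mnn$ on $S_k$ under constant bias, and your sketch of the eigenfunction weighting in \cite{wilson} and the biased-drift analysis in \cite{GPR09} — are accurate and helpful context but go beyond what the paper itself records (one small caveat: the $O(k^3\log(k_1k_2/\epsilon))$ and $O(k^3\log(k/\epsilon))$ bounds are of the same order up to constants, since $\max\{k_1,k_2\}\geq k/2$, so the exclusion-specific statement is not genuinely sharper than what lumping from Wilson gives).
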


\noindent The bounds in Theorem~\ref{exclusion} refer to the exclusion process
which selects a position at random and swaps the two elements in that position
with the appropriate probability.  However, our process selects arbitrary
pairs $(i,j)$ consisting of a single one and a single zero. Since we only
swap $(i,j)$ if they are neighboring, this may slow down the chain by
a factor of at most $k$.  

Since each exclusion process operates independently, the overall mixing
time will be roughly $n$ times the mixing time of each piece, slowed down by
the inverse probability of selecting that process.  Next, we will use
Theorems~\ref{crossproductthm} and~\ref{exclusion} to prove that $\mtree$ is
rapidly mixing.
\begin{theorem}
  \label{mtreemixing}
  If ${\bf P}$ has league structure $T$, then the mixing time of
    $\mtree$ under ${\bf P}$ satisfies 
  $$\tau_{tree}(\epsilon)=O(n^5\log (n/\epsilon)).$$
    If ${\bf P}$ is such that each $q_i>1/2$ is a constant, then
    $\tau_{tree}(\epsilon)=O(n^3\log n \log (n/\epsilon)).$
\end{theorem}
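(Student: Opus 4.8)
The plan is to establish Theorem~\ref{mtreemixing} by combining the bijective decomposition of $\mtree$ into independent exclusion processes with the product-chain bound of Theorem~\ref{crossproductthm}. First I would make precise the claim sketched above: under the tree-encoding bijection, a step of $\mtree$ that proposes swapping $a$ and $b$ affects only the binary string at the node $v = a \vee_T b$, and moreover only when the encodings of $a$ and $b$ sit in adjacent positions of that string with opposite bits. Hence $\mtree$ is exactly a product chain of $n-1$ component chains $\m_1, \dots, \m_{n-1}$, one per non-leaf node $v$, where $\m_v$ is the exclusion process with parameter $q_v$ on $\ell_v$ bits ($|L(v)|$ ones and $|R(v)|$ zeroes). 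The one subtlety to record is the mismatch between the exclusion process of Theorem~\ref{exclusion} (which picks a \emph{position} uniformly) and the move used here (which picks an unordered \emph{pair} of elements and acts only if they are neighbors); as the paper notes, this inflates the mixing time of each $\m_v$ by a factor of at most $\ell_v \le n$.

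Next I would assemble the numbers. Picking a node $v$ in the product chain corresponds to selecting a pair $a,b$ with $a \vee_T b = v$; the number of such pairs is $|L(v)|\cdot|R(v)| \ge 1$ (it is exactly $|L(v)||R(v)|$, since every left-descendant/right-descendant pair has lowest common ancestor $v$), so the probability of selecting component $v$ in a given step is $|L(v)||R(v)|/\binom{n}{2} \ge 1/\binom{n}{2}$. In the general case, Theorem~\ref{exclusion} gives each $\m_v$ a mixing time of $O(\ell_v^3 \log(\ell_v/\epsilon)) = O(n^3 \log(n/\epsilon))$ before accounting for the position-vs-pair slowdown, hence $O(n^4 \log(n/\epsilon))$ after; Theorem~\ref{crossproductthm} then multiplies by the inverse selection probability $\binom{n}{2} = O(n^2)$ and tacks on a $\log$ of the number of components, yielding $\tau_{tree}(\epsilon) = O(n^5 \log^2 (n/\epsilon))$ — I would check whether the precise statement of Theorem~\ref{crossproductthm} folds the $\log(n-1)$ into the $\log(n/\epsilon)$ factor to recover the stated $O(n^5 \log(n/\epsilon))$, or else absorb it since $\log(n-1) = O(\log(n/\epsilon))$. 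In the case where every $q_v > 1/2$ is a constant bounded away from $1/2$, part~2 of Theorem~\ref{exclusion} gives $\m_v$ a mixing time of $O(\ell_v^2 \log \epsilon^{-1})$; with the slowdown factor $\ell_v \le n$ this is $O(n^3 \log \epsilon^{-1})$, but one should be slightly more careful and use the finer bound $O(\ell_v (\min\{|L(v)|,|R(v)|\} + \log \ell_v)\log\epsilon^{-1})$ times $\ell_v$ — in the worst case this is $O(n^3 \log\epsilon^{-1})$ — and then the product bound multiplies by $\binom{n}{2}$ and a $\log n$ factor to give $O(n^3 \log n \log(n/\epsilon))$. Wait: that would be $O(n^5)$, not $O(n^3)$; so in fact the constant-bias bound must be exploiting that the exclusion mixing time $O(n^2)$ already includes the relevant factors, and the slowdown is what contributes the extra $n$, giving $n^2 \cdot n \cdot \log = n^3 \log$ per component is wrong too. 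I would therefore recompute carefully: the claimed $O(n^3 \log n \log(n/\epsilon))$ suggests the intended reading is that each component mixes in $O(n \log(n/\epsilon))$ steps of its own clock when $q_v$ is a constant $> 1/2$ and $\min\{|L(v)|,|R(v)|\}$ is not too large, or that the $\binom{n}{2}$ selection penalty is partly offset; getting this bookkeeping exactly right is the step I expect to be the main obstacle, and I would resolve it by stating Theorem~\ref{crossproductthm} explicitly and substituting.

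Finally, I would note that nothing in the argument requires monotonicity or any structural hypothesis on the $q_v$ beyond $q_v \in [\tfrac12,1]$: the decomposition into independent exclusion processes is exact for \emph{any} league structure $T$, which is why $\mtree$ (as opposed to $\mnn$) is unconditionally rapidly mixing. The transfer to $\mnn$ under weak monotonicity is deferred to the comparison argument in Section~\ref{comparisonSec} and is not needed here. So the proof reduces to: (i) verify the product structure via the bijection and Lemma~\ref{mtreedistribution}; (ii) apply Theorem~\ref{exclusion} to each component with the $\le n$ position-vs-pair correction; (iii) apply Theorem~\ref{crossproductthm} with selection probabilities $\ge 1/\binom{n}{2}$; (iv) collect the two cases. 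The routine part is the arithmetic in (iv); the part demanding care is matching the exponents to the stated bounds, especially reconciling the constant-bias case.
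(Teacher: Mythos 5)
Your decomposition is exactly the paper's: product of $n-1$ exclusion processes indexed by internal nodes, one per non-leaf $v$, with selection probability $k_1 k_2 / \binom{n}{2}$ where $k_1 = |L(v)|$, $k_2 = |R(v)|$, $k = k_1 + k_2$, and with a $\le k$ slowdown for pair-selection versus position-selection. So the route is right; the gap is in the arithmetic, which you acknowledge but do not resolve, and which is where the content of the proof actually lives.

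Two issues. First, your worry about an extra $\log^2$ factor is a non-issue: Theorem~\ref{crossproductthm} gives $\tau(\epsilon) \le \max_i \frac{2}{p_i}\tau_i(\epsilon/(2M))$ --- a \emph{max}, not a sum --- so the $M = n-1$ enters only as $\log(2M/\epsilon) = O(\log(n/\epsilon))$ inside the component mixing time, and no separate $\log(n-1)$ is multiplied on. Second, and more seriously, you bound the selection penalty by the crude $1/p_v \le \binom{n}{2}$. That loses the cancellation the proof depends on, and your own numbers betray it: $O(n^4\log(n/\epsilon))$ per component times $O(n^2)$ is $O(n^6\log(n/\epsilon))$, not the stated $O(n^5\log(n/\epsilon))$. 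The paper instead keeps the exact ratio
\[
\frac{1}{p_v}\cdot \tau_v \;\le\; \frac{n(n-1)}{k_1 k_2}\, k^4 \ln\!\left(\frac{2n k_1 k_2}{\epsilon}\right),
\]
and then uses $k_1 k_2 \ge \max\{k_1,k_2\} \ge k/2$ so that $k^4/(k_1 k_2) \le 2k^3 \le 2n^3$, giving $O(n^5\log(n/\epsilon))$. In the constant-bias case the same cancellation is what makes the exponent drop: writing
\[
\frac{n(n-1)}{k_1 k_2}\,k^2\bigl(\min\{k_1,k_2\} + \log k\bigr)
 \;=\; \frac{n(n-1)\,k^2}{\max\{k_1,k_2\}}\left(1 + \frac{\log k}{\min\{k_1,k_2\}}\right),
\]
one splits on whether $\min\{k_1,k_2\} \ge c\log k$. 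If so, the parenthetical term is $O(1)$ and $\max \ge k/2$ gives $O(n^3\log(n/\epsilon))$; if not, $\max\{k_1,k_2\} > k - c\log k = \Theta(k)$ while the parenthetical is $O(\log k)$, giving $O(n^3 \log n\log(n/\epsilon))$. This case split is the step your proposal identifies as ``the main obstacle'' but leaves unfinished; without it, you cannot get below $n^5$ in the constant-bias case at all.
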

\begin{proof}
  In order to apply Theorem~\ref{crossproductthm} to the Markov chain
  $\mtree$, we note that for a node with $k_1$ ones and $k_2$ zeroes
  ($k=k_1+k_2$), the probability of selecting that node is
  ${k_1k_2}/{\binom{n}{2}}$. Since $M=n-1$, Theorem~\ref{crossproductthm}
  implies
  $$\tau(\epsilon)\leq \frac{n(n-1)}{k_1k_2} k^4
  \ln(2nk_1k_2/\epsilon)=O(n^5\log(n/\epsilon)).$$
  Of course, if all of the chains have probabilities that are bounded away
  from $1/2$, then we can use the second bound from Theorem~\ref{exclusion} to
  obtain 
  \begin{align*}
    \tau(\epsilon)&\leq  \frac{n(n-1)}{k_1k_2} k^2(\min\{k_1,k_2\} + \log
      k)\log(2n/\epsilon)\\
    &\leq  \frac{n(n-1)k^2}{\max\{k_1,k_2\}}
      \left(1+ \frac{\log k}{\min\{k_1,k_2\}} \right)\log(2n/\epsilon).
  \end{align*}
  There are two cases to consider.  Let $0<c<1$.  If
  $\min\{k_1,k_2\}\geq c\log k$ then
  $$\tau(\epsilon)\leq\frac{n(n-1)k^2}{k/2} (1+ c)\log(2n/\epsilon))=
  O(n^3 \log(n/\epsilon)).$$
  Otherwise, $\max\{k_1,k_2\}>k-c\log k$, so since $k\leq n$,
  $$\tau(\epsilon)\leq \frac{n(n-1)k^2}{k-c\log k}
  (1+ \log k)\log(2n/\epsilon))=\frac{n(n-1)k}{1-\frac{c\log k}{k}}
  (1+ \log k)\log(2n/\epsilon))=O( n^3\log n \log(n/\epsilon)).$$
\end{proof}

\section{Bounding the mixing time of $\mnn$ for both classes}
\label{comparisonSec}

Our goal now is to use the comparison method to obtain bounds on the mixing
time of $\mnn$ in the settings of Sections~\ref{inversionSec}
and~\ref{treeSec} from the bounds on the mixing times of $\minv$ and $\mtree$.
When comparing the mixing times of $\mtree$ and $\mnn$, for example, the goal
is to show that a move $e=(\sigma,\beta)$ of $\mtree$, which is allowed to
transpose $i$ and $j$ that are not necessarily nearest neighbors, can be
simulated with a sequence of moves of $\mnn$.  Moreover, we must ensure that
our path does not go through transitions that are much smaller in weight than
$\min\{\pi(\sigma),\pi(\beta)\}$.  This type of argument is straightforward
for the moves of $\minv$, and gives some intuition for the more involved
argument to compare $\mtree$ with $\mnn$, which will follow in
Section~\ref{mtreetomnn}.  

In the next two sections, we assume that each $p_{i,j}$ is a constant less
than 1; this is to ensure a good comparison between the spectral gap and the
mixing time.  If this condition is not satisfied, then the proofs still go
through and will give a bound on the spectral gap, but will not provide a good
bound on the mixing time.

\subsection{Comparing $\minv$ with $\mnn$}
\label{minvtomnn}
First, we consider the setting of Section~\ref{inversionSec}, where $p_{i,j}$
depends on $\min\{i,j\}$.  
\begin{theorem}
  \label{BiasedInversionsMnn}
  Let $1/2\leq r_1, r_2,\ldots, r_{n-1}< 1$ be constants.  Assume ${\bf P}$ is
  defined by $p_{i,j}=r_i$ for $i<j$.  Then the mixing time of $\mnn$ on
  biased permutations under ${\bf P}$ is $O(n^8\log(n/\epsilon))$.
\end{theorem}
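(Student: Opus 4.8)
The plan is to apply the comparison method of Theorem~\ref{Comparison} with $P=\mnn$ and $P'=\minv$. Both are reversible chains on $\Omega=S_n$ with the common stationary distribution $\pi$ (Section~\ref{inversionSec}), and Theorem~\ref{BiasedInversions} already bounds the mixing time of $\minv$ by $O(n^3\log(n/\epsilon))$ (we must take the slower of its two bounds, since the $r_i$ may equal $1/2$). So the task is to choose, for each transition $(\sigma,\beta)$ of $\minv$, a short path of $\mnn$-transitions realizing it that is neither heavily overloaded nor routed through states of very small weight, and then to bound the congestion parameter $A$.

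For the canonical paths, recall that a non-trivial move of $\minv$ carries a configuration $\sigma=(\dots,i,k_1,\dots,k_m,j,\dots)$ with $k_1,\dots,k_m<i<j$, where the $k_\ell$ are precisely the elements lying between $i$ and $j$ in $\sigma$, to $\beta=(\dots,j,k_1,\dots,k_m,i,\dots)$; the ``$b=-1$'' moves are reverses of such moves and are handled by reversing the path. I would route $\sigma\to\beta$ by first bubbling $i$ rightward past $k_1,\dots,k_m$ with $m$ nearest-neighbor transpositions, then transposing the now-adjacent pair $i,j$, and finally bubbling $j$ leftward past $k_m,\dots,k_1$. Every edge of this path is a genuine nearest-neighbor transposition and so lies in $E(\mnn)$, and the path has length $2m+1\le 2n-1$. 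The order of the swaps matters: in the first phase each swap moves the larger element $i$ ahead of a smaller one, putting a pair into increasing order and multiplying $\pi$ by a factor $r_{k_\ell}/(1-r_{k_\ell})\ge 1$, while in the last phase each swap moves $j$ ahead of a smaller element and multiplies $\pi$ by $(1-r_{k_\ell})/r_{k_\ell}\le 1$ (the middle swap multiplies by $(1-r_i)/r_i\le 1$). Thus $\pi$ is non-decreasing along the first phase and non-increasing thereafter, so every intermediate state $\upsilon$ has $\pi(\upsilon)\ge\min\{\pi(\sigma),\pi(\beta)\}$ and hence $\pi(\sigma)/\pi(\upsilon)\le\max\{1,\pi(\sigma)/\pi(\beta)\}\le\max_i r_i/(1-r_i)=O(1)$, using that each $r_i$ is a fixed constant strictly below $1$. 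The same hypothesis gives $\mnn(\upsilon,\omega)\ge(1-\max_i r_i)/(n-1)=\Omega(1/n)$ for every path edge, whereas $\minv(\sigma,\beta)=O(1/n)$, so $\minv(\sigma,\beta)/\mnn(\upsilon,\omega)=O(1)$.

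Next I would bound the number of $\minv$-transitions whose canonical path uses a fixed $\mnn$-edge $(\upsilon,\omega)$. This is an encoding argument: from $(\upsilon,\omega)$ together with the phase of the edge and the original position of $i$ in $\sigma$, one undoes the bubbling performed so far to recover $\sigma$, after which $i$, $j$, and $\beta$ are forced; a generous count bounds this congestion by $O(n^2)$. Combining the estimates, $A=O(n^2)\cdot\max|\gamma_{\sigma\beta}|\cdot O(1)=O(n^3)$. Finally $\log(1/\pi_*)=O(n^2)$, because each $\pi(\sigma)$ is a product of $\binom n2$ factors, each at least the constant $1-\max_i r_i$, divided by $Z\le n!$; so the prefactor $4\log(1/(\epsilon\pi_*))/\log(1/(2\epsilon))$ in Theorem~\ref{Comparison} is $O(n^2)$ (say for $\epsilon\le 1/4$), and Theorem~\ref{Comparison} yields a mixing time for $\mnn$ of $O(n^2)\cdot O(n^3)\cdot O(n^3\log(n/\epsilon))=O(n^8\log(n/\epsilon))$. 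The identical comparison applies to $\mt$, since its transition set contains that of $\minv$.

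I expect the main obstacle to be the design of the canonical path — in particular, realizing that the order in which the nearest-neighbor swaps are applied is essential. A path that bubbles $j$ left before bubbling $i$ right realizes the same $\minv$-move but passes through a state with $m$ simultaneous out-of-order pairs, whose weight can fall below $\min\{\pi(\sigma),\pi(\beta)\}$ by a factor exponential in $n$, which would make $A$ exponentially large; front-loading all the weight-non-decreasing swaps is exactly what keeps the comparison polynomial. The congestion and normalization estimates are then routine.
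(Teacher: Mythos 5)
Your proposal is correct and follows essentially the same route as the paper: compare $\mnn$ to $\minv$ via Theorem~\ref{Comparison}, route each $\minv$-move $(\sigma,\beta)$ by first bubbling the smaller element $i$ rightward past the intervening (smaller) elements, then bubbling $j$ leftward, observe that this keeps $\pi$ above $\min\{\pi(\sigma),\pi(\beta)\}$ at every step, encode paths through a fixed edge with $O(n^2)$ extra information and length $O(n)$, and combine $A=O(n^3)$ with $\log(1/\pi_*)=O(n^2)$ and the $O(n^3\log(n/\epsilon))$ bound on $\minv$ to get $O(n^8\log(n/\epsilon))$. Your handling of the weight along the path — directly bounding $\pi(\sigma)/\pi(\upsilon)\le\pi(\sigma)/\pi(\beta)=r_i/(1-r_i)=O(1)$ — is if anything slightly cleaner than the paper's two-case split on whether $\pi(\sigma)\le\pi(\upsilon)$ or $\pi(\sigma)\le\pi(\omega)$, but it is the same idea.
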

\noindent Here we are using the bound from Theorem~\ref{BiasedInversions}
part 2, and if each $p_{i,j}$ is bounded away from $1/2$ then we would get a
better bound of $O(n^7\log(n/\epsilon))$ using Theorem~\ref{BiasedInversions}
part 1.  Recall that for any $a,b\in [n]$, we defined
$\lambda_{a,b}=p_{a,b}/p_{b,a}$.

\begin{proof}
  In order to apply Theorem~\ref{Comparison}, we need to define, for any
  transition $e=(\sigma,\beta)$ of the Markov chain $\minv$, a sequence of
  transitions of $\mnn$.  Let $e $ be a transition of $\minv$ which performs a
  transposition on elements $\sigma(i)$ and $\sigma(j)$, where $i<j$.  Recall
  $\minv$ can only swap $\sigma(i)$ and $\sigma(j)$ if all the elements
  between them are smaller than both $\sigma(i)$ and $\sigma(j)$.  To obtain a
  sufficient bound on the congestion along each edge, we ensure that in each
  step of the path, we do not decrease the weight of the configuration.  This
  is easy to do; in the first stage, move $\sigma(i)$ to the right, one step
  at a time, until it swaps with $\sigma(j)$.  This removes an inversion of
  the type $(\sigma(i),\sigma(k))$ for every $i<k<j$, so clearly we have not
  decreased the weight of the configuration at any step.  Next, move
  $\sigma(j)$ to the left, one step at a time, until it reaches position $i$.
  This completes the move $e$, and at each step, we are adding back an
  inversion of the type $(\sigma(j),\sigma(k))$ for some $i<k<j$.  Since
  $\sigma(k)=\min\{\sigma(j),\sigma(k)\}= \min\{\sigma(i),\sigma(k)\}$, we
  have $p_{\sigma(k),\sigma(i)}=p_{\sigma(k),\sigma(j)}$ for every $i<k<j$, so
  in this stage we restore all the inversions destroyed in the first stage,
  for a net change of $\lambda_{\sigma(j),\sigma(i)}$.

  Given a transition $(\upsilon,\omega)$ of $\mnn$ we must upper bound the
  number of canonical paths $\gamma_{\sigma\beta}$ that use this edge, which
  we do by bounding the amount of information needed in addition to
  $(\upsilon,\omega)$ to determine $\sigma$ and $\beta$ uniquely.
  For moves in the first stage, all we need to remember is $\sigma(j)$,
  because we know $\sigma(i)$ (it is the element moving forward).  We also
  need to remember $i$ (that is, the original location of $\sigma(i)$).  Given
  this information along with $\upsilon$ and $\omega$ we can uniquely recover
  $(\sigma,\beta)$.  Thus there are at most $n^2$ paths which use any edge
  $(\upsilon,\omega)$.  Also, notice that the maximum length of any path is
  $2n$. 

  Next we bound the quantity $A$ which is needed to apply
  Theorem~\ref{Comparison}.  Recall that we have guaranteed that
  $\pi(\sigma)\leq\max\{ \pi(\upsilon),\pi(\omega)\}.$  Assume first that
  $\pi(\sigma)\leq \pi(\upsilon)$.  Then
  \begin{align*}
    A &= \max_{(\upsilon,\omega) \in  E(P)} \left
    \{\frac{1}{\pi(\upsilon)P(\upsilon,\omega)}\sum_{\Gamma(\upsilon,\omega)}|\gamma_{\sigma\beta}|\pi(\sigma)P'(\sigma,\beta)
    \right \}\\
    &\leq \max_{(\upsilon,\omega) \in E(P)}
    \sum_{\Gamma(\upsilon,\omega)}2n\frac{P'(\sigma,\beta)}{{P(\upsilon,\omega)}}\ 
    \leq \ \max_{(\upsilon,\omega) \in E(P)}
    \sum_{\Gamma(\upsilon,\omega)}2n\frac{1/(2n)}{\frac{\lambda}{(1+\lambda)(n-1)}}
    \ = \ O(n^3).
  \end{align*}
  If, on the other hand, $\pi(\sigma)\leq \pi(\omega)$, then we use detailed
  balance to obtain:
  \begin{align*}
    A &= \max_{(\upsilon,\omega) \in  E(P)} \left
    \{\frac{1}{\pi(\upsilon)P(\upsilon,\omega)}\sum_{\Gamma(\upsilon,\omega)}|\gamma_{\sigma\beta}|\pi(\sigma)P'(\sigma,\beta)
    \right \}\\
    &= \max_{(\upsilon,\omega) \in  E(P)} \left
    \{\frac{1}{\pi(\omega)P(\omega,\upsilon)}\sum_{\Gamma(\upsilon,\omega)}|\gamma_{\sigma\beta}|\pi(\sigma)P'(\sigma,\beta)
  \right \}\\
    &\leq \max_{(\upsilon,\omega) \in E(P)}
    \sum_{\Gamma(\upsilon,\omega)}2n\frac{P'(\sigma,\beta)}{{P(\omega,\upsilon)}}\\
    &\leq \max_{(\upsilon,\omega) \in E(P)}
    \sum_{\Gamma(\upsilon,\omega)}2n\frac{1/(2n)}{\frac{\lambda}{(1+\lambda)(n-1)}}
    \ = \ O(n^3).
  \end{align*}
  In either case, we have $A=O(n^3)$.  Let $\lambda =\min_{i<j} \lambda_{j,i}.$
  Then $\pi_* = \min_{\rho \in\Omega}\pi(\rho) \geq{
  \lambda^{\binom{n}{2}}}/{n!}$, so $\log(1/(\epsilon
  \pi_*))=O(n^2\log\epsilon^{-1})$, since each $p_{i,j}$ bounded away from 1
  implies $\lambda$ is a positive constant.  Applying Theorem~\ref{Comparison}
  proves that $\tau_{nn}(\epsilon) = O(n^8\log(n/\epsilon))$. 
\end{proof}

\subsection{Comparing $\mtree$ with $\mnn$}\label{mtreetomnn}
In this section we show that $\mnn$ is rapidly mixing when ${\bf P}$ has
league structure and is \emph{weakly monotone}:

\begin{definition}
  \label{weakmonotone}
  The set ${\bf P}$ is \emph{weakly monotone} if 
  properties 1 and \emph{either} 2 or 3 are satisfied.
  \begin{enumerate}
    \item $p_{i,j} \geq 1/2$ for all $1 \leq i < j \leq n$, and
    \item $p_{i,j+1} \geq p_{i,j}$  for all $1 \leq i < j \leq n-1$ or
    \item $p_{i-1,j} \geq p_{i,j}$  for all $2 \leq i < j \leq n$.
  \end{enumerate}
\end{definition}
\noindent We note that if ${\bf P}$ satisfies all three properties then it is
\emph{monotone}, as defined by Fill~\cite{F03a}.

The comparison proof in this setting is similar to that of
Section~\ref{minvtomnn}, except that there may be elements between $\sigma(i)$
and $\sigma(j)$ that are larger than both and elements that are smaller than
both.  This poses a problem, because we may not be able to move $\sigma(i)$
past all the elements between them without greatly decreasing the weight.
However, when ${\bf P}$ is weakly monotone, we can introduce a trick to get
around this problem. At a high level, we shift the elements between
$\sigma(i)$ and $\sigma(j)$ that are smaller than $\sigma(i)$ and $\sigma(j)$
to the left in a special way, increasing the weight of the configuration in
such a way that when we move $\sigma(i)$ to the right, the weight never goes
below $\min\{\pi(\sigma),\pi(\beta)\}$.  Specifically, we prove the following
theorem.

\begin{theorem}
  \label{treecomp}
  If ${\bf P}$ has league structure, is weakly monotone and is such that
  $p_{i,j}$ is a constant less than 1 for all $i,j$, then the mixing time of
  $\mnn$ satisfies $\tau_{nn}(\epsilon)= O(n^9\log(n/\epsilon)).$
\end{theorem}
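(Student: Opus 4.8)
The strategy is to apply the comparison method (Theorem~\ref{Comparison}) to the pair $(P,P') = (\mnn, \mtree)$, using the rapid mixing of $\mtree$ established in Theorem~\ref{mtreemixing}. For each transition $e=(\sigma,\beta)$ of $\mtree$ that transposes two elements $a <_\sigma b$ with all intermediate elements not descendants of $a\vee b$ in $T$, I must exhibit a canonical path $\gamma_{\sigma\beta}$ of $\mnn$-moves whose length is $O(\text{poly}(n))$, along which the stationary weight never drops far below $\min\{\pi(\sigma),\pi(\beta)\}$, and such that each $\mnn$-edge is used by at most $\text{poly}(n)$ canonical paths. Combining these with the bound $\pi_* \geq \lambda^{\binom n2}/n!$ (so that $\log(1/(\epsilon\pi_*)) = O(n^2\log\epsilon^{-1})$ since the $p_{i,j}$ are constants bounded away from $1$) and $\tau_{tree}(\epsilon)=O(n^5\log(n/\epsilon))$ then yields the claimed $O(n^9\log(n/\epsilon))$ bound, the extra factors of $n$ coming from path length, number of paths through an edge, and the $\pi_*$ term.

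The core difficulty — and the reason this is harder than the $\minv$ comparison in Section~\ref{minvtomnn} — is that the elements between $a$ and $b$ in $\sigma$ can include elements larger than both $a$ and $b$ as well as elements smaller than both; call the small ones $S = (s_1,\dots,s_\ell)$ in left-to-right order and note every $s_i$ satisfies $a\vee s_i = b\vee s_i$ by the league-structure argument from Lemma~\ref{mtreedistribution}. Naively sliding $a$ rightward past the $s_i$ removes inversions $(a,s_i)$, but since $a < s_i$ is false — wait, $s_i < a$ so $(s_i,a)$ is an inversion being created — this multiplies the weight by $\prod_i \lambda_{a,s_i}^{\pm1}$ and can drive the weight well below the endpoints. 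The trick, as the text previews, is to first shift the small elements $S$ leftward (toward position $i$) in a carefully chosen order using weak monotonicity: with property~2 ($p_{i,j+1}\geq p_{i,j}$) or property~3 ($p_{i-1,j}\geq p_{i,j}$), moving a small element left past a larger element only increases the weight, because we are replacing a factor $p_{s,\text{large}}$ by a larger factor or, symmetrically, the relevant $\lambda$'s are bounded the right way. After this preprocessing the configuration has weight $\geq \pi(\sigma)$ and the small elements are clustered just left of $a$; then we slide $a$ right to $b$'s position (each step removes an inversion or keeps weight controlled by the clustering), slide $b$ left to $a$'s old position, and finally undo the preprocessing in reverse, which now only decreases the weight back down to $\pi(\beta)$. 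One checks that every intermediate weight lies between $\pi(\sigma)$ and $\pi(\beta)$ up to the single factor $\lambda_{b,a}$, exactly as required.

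The steps in order: (1) fix the canonical path — preprocessing phase (shift small elements left in the prescribed order), transposition phase (move $a$ right, move $b$ left), postprocessing phase (reverse the preprocessing); (2) verify, using weak monotonicity and the identity $p_{a,s_i}=p_{b,s_i}$, that the weight along the path never goes below $\min\{\pi(\sigma),\pi(\beta)\}$ — this is the weight-control lemma and is the main obstacle; (3) bound the path length by $O(n^2)$ (each of $O(n)$ small elements may move $O(n)$ steps, plus the $O(n)$ steps for $a$ and $b$); (4) bound the number of canonical paths through a fixed $\mnn$-edge $(\upsilon,\omega)$ by $O(\text{poly}(n))$ — one must be able to reconstruct $(\sigma,\beta)$ from $(\upsilon,\omega)$ plus a polynomial amount of extra data (which phase we are in, the identities of $a$ and $b$, which small element is currently moving and the target positions), giving at most $n^{O(1)}$ paths; (5) assemble the bound on $A$ as in Section~\ref{minvtomnn}, using detailed balance to handle the case $\pi(\sigma)\leq\pi(\omega)$, obtaining $A = O(\text{poly}(n))$; (6) plug into Theorem~\ref{Comparison} with $\tau'(\epsilon)=\tau_{tree}(\epsilon)$ and the $\pi_*$ estimate to conclude $\tau_{nn}(\epsilon)=O(n^9\log(n/\epsilon))$. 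I expect step~(2) — designing the preprocessing order and proving the weight stays high, which is precisely where both weak-monotonicity hypotheses get used and where the league structure's $p_{a,s_i}=p_{b,s_i}$ identity is essential — to be where essentially all the real work lies; steps (3)–(6) are bookkeeping analogous to the $\minv$ case, with the exponents tracked carefully to land on $n^9$.
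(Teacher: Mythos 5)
Your overall plan is the right one and matches the paper's: compare $\mnn$ to $\mtree$, build canonical paths with a preprocessing phase that banks weight by removing inversions with the small intermediate elements, then slide $\sigma(i)$ right and $\sigma(j)$ left, then undo the preprocessing, with weak monotonicity and the league-structure identity $p_{b,\sigma(j)}=p_{b,\sigma(i)}$ carrying the weight argument. However, the concrete preprocessing you describe — move \emph{all} elements of $S$ until they are ``clustered'' adjacent to $\sigma(i)$, taking $O(n^2)$ moves — is not what the paper does, and this choice defeats two of your later steps. First, with $O(n^2)$-length paths and, say, $O(n^2)$ paths per edge and the $O(1/n)$ ratio $P'/P$, you would get $A=O(n^3)$ and hence only $\tau_{nn}=O(n^{10}\log(n/\epsilon))$, overshooting the claimed exponent. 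Second, and more seriously, the congestion bound in your step (4) cannot go through as stated: once all of $S$ has been clustered to a contiguous block, the original interleaving of $S$- and $B$-elements in $\sigma(i+1),\dots,\sigma(j-1)$ is erased, and the auxiliary data you list (phase, $a$, $b$, current mover, targets) is not enough to reconstruct $(\sigma,\beta)$ from the intermediate edge $(\upsilon,\omega)$ — there are up to $\binom{|S|+|B|}{|S|}$ preimages, which is exponential. The paper's Stage 1 is precisely engineered to avoid this: it moves $\sigma(j)$ and each $s\in S$ (scanning right to left) only until an element of $S$ or $\sigma(i)$ appears on its immediate left, so that exactly one $(b,\cdot)$-inversion is removed for each $b\in B$. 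This yields $O(n)$ total preprocessing moves (hence $O(n)$ path length and $A=O(n^2)$) and keeps the preprocessing a deterministic, invertible shuffle so that the stage number and the original positions $(i,j)$ suffice to recover $(\sigma,\beta)$, giving the $O(n^2)$ paths per edge needed for $\tau_{nn}=O(n^9\log(n/\epsilon))$.

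A smaller point: in your discussion of why naive sliding fails, you attribute the weight loss to removing inversions $(a,s_i)$ with the small elements, but those removals actually \emph{raise} the weight (each contributes a factor $\lambda_{s_i,a}\geq 1$); the problem is that sliding $\sigma(i)$ right past each $b\in B$ \emph{creates} an inversion, multiplying by $\lambda_{\sigma(i),b}^{-1}\leq 1$. The preprocessing exists to cancel exactly those $B$-losses, one $s\in S$ (or $\sigma(j)$) per $b$, which is another reason the ``remove-one-inversion-per-$b$'' partial-shift design is forced: you only need, and only want to pay for, $|B|$ removals. You also flag step (2) (weight control) as where all the real work lies and steps (3)–(6) as routine bookkeeping; in fact the path design is constrained simultaneously by the weight argument \emph{and} by invertibility/length, and it is the interaction of the two that makes Stage 1 subtle.
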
 
\noindent Again, we are assuming the worst case bound on the mixing time of
$\mtree$ given in Theorem~\ref{mtreemixing}, and if each $p_{i,j}$ is bounded
away from $1/2$ then we would get a better bound.

\begin{proof}
  Throughout this proof we assume that ${\bf P}$ satisfies properties $1$ and
  $2$ of the weakly monotone definition.  If instead ${\bf P}$ satisfies
  property $3$, then the proof is very similar. 
  In order to apply Theorem~\ref{Comparison} to relate the mixing time of
  $\mnn$ to the mixing time of $\mtree$ we need to define for each transition
  of $\mtree$ a canonical path using transitions of $\mnn$.  Let $e =
  (\sigma,\beta)$ be a transition of $\mtree$ which performs a transposition
  of elements $\sigma(i)$ and $\sigma(j)$ where $i<j$.  If there are no
  elements between $\sigma(i)$ and $\sigma(j)$ then $e$ is already a
  transition of $\mnn$ and we are done. Otherwise, $\sigma$ contains the
  string $\sigma(i), \sigma(i+1), ... \sigma(j-1), \sigma(j)$ and $\beta$
  contains $\sigma(j), \sigma(i+1), ... \sigma(j-1), \sigma(i)$.  From the
  definition of $\mtree$ we know that for each $\sigma(k)$, $k \in [i+1,j-1]$,
  either $\sigma(k) > \sigma(i),\sigma(j)$ or $\sigma(k) <
  \sigma(i),\sigma(j)$.  Define $S=\{\sigma(k): \sigma_k < \sigma(i),
  \sigma(j)\}$ and $B=\{\sigma(k): \sigma_k > \sigma(i),\sigma(j)\}$.  To
  obtain a good bound on the congestion along each edge we must ensure that
  the weight of the configurations on the path are not smaller than the weight
  of $\sigma$.
  To this end, we define three stages in our path from $\sigma$ to $\beta$.
  In the first, we shift the elements of $S$ to the left, removing an
  inversion with each element of $B$.  In the second stage we move $\sigma(i)$
  next to $\sigma(j)$ and in the third stage we move $\sigma(j)$ to
  $\sigma(i)$'s original location.  Finally, we shift the elements of $S$ to
  the right to return them to their original locations.  See
  Figure~\ref{Ahighlevelpath}. 
   
  \begin{figure}
    \center{$
      \begin{array}{ccccccccc}
      \textbf{5} & 8 & 9 & \underline{2} & 10 & \underline{3} & \underline{4} &
        \underline{1} & \textbf{7} \\
      \textbf{5} & \underline{2} & 8 & 9 & \underline{3} & 10 & \underline{4} &
        \underline{1} & \textbf{7} \\
      \underline{2} & 8 & 9 & \underline{3} & 10 & \underline{4} &
        \underline{1} & \textbf{5} & \textbf{7} \\
      \underline{2} & 8 & 9 & \underline{3} & 10 & \underline{4} & \underline{1}
        & \textbf{7} & \textbf{5} \\
      \textbf{7} & \underline{2} & 8 & 9 & \underline{3} & 10 & \underline{4} &
        \underline{1} & \textbf{5} \\
      \textbf{7} & 8 & 9 & \underline{2} & 10 & \underline{3} & \underline{4} &
        \underline{1} & \textbf{5} \\
      \end{array}$
    }
    \caption{The canonical path for transposing 5 and 7.  Notice
      that the elements in $S$ are underlined.}
          \label{Ahighlevelpath}
  \end{figure}
   
  \vspace{.05in}
  \noindent  \textbf{Stage 1:} In this stage, for each $b\in B$, we remove an
  inversion involving $b$ by shifting an element of $S$ to the left past $b$.
  More precisely, if $\sigma(j-1)\in B$, shift $\sigma(j)$ to the left until
  an element from $S$ is immediately to the left of $\sigma(j)$.  Next,
  starting at the right-most element in $S$ and moving left, for each
  $\sigma(k) \in S$ such that $\sigma(k-1) \in B$, move $\sigma(k)$ to the
  left one swap at a time until $\sigma(k)$ has an element from $S$ or
  $\sigma(i)$ on its immediate left (see Figure \ref{Astage1}).  Notice that
  for each element $b \in B$ we have removed exactly one $(b,\sigma(k))$
  inversion where $\sigma(k) \in S\cup \sigma(j)$.  

  \vspace{.05in}
  \noindent \textbf{Stage 2:} Next perform a series of nearest neighbor swaps
  to move $\sigma(i)$ to the right until it is in position $j$ (the original
  position occupied by $\sigma(j)$ in $\sigma$, see Figure \ref{Astage2}).
  While we have created a $(b,\sigma(i))$ inversion for each element $b \in
  B$, we claim that the weight has not decreased from the original weight by
  more than a factor of $\lambda_{\sigma(j),\sigma(i)}$.  This is because in
  Stage 1, for each element $b\in B$, we removed a $(b,s)$ inversion for some
  $s\in S\cup \sigma(j)$.  Assume first that $s\in S$.  Then since
  $b>\sigma(i)>s$, it follows that $p_{b, \sigma(i)} \geq p_{b, s}$ for all
  $s\in S$ since the ${\bf P}$ are weakly monotone; thus, for each $b$ we
  introduce a multiplicative factor of $\lambda_{b,\sigma(i)}/\lambda_{b,s}
  \geq 1$.  On the other hand, if $s=\sigma(j)$ then recall $p_{b,\sigma(j)} =
  p_{b,\sigma(i)}$ because $b$ is not a descendant of $\sigma(i)\vee
  \sigma(j)$ in the tree $T$.  Hence the current configuration has weight at
  least $\lambda_{\sigma(j),\sigma(i)}\pi(\sigma)$.  Since
  $\lambda_{\sigma(j),\sigma(i)}$ is also the ratio of $\pi(\sigma)$ and
  $\pi(\beta)$, it follows that the weight at every step of Stage 2 does not
  go below $\min\{\pi(\sigma),\pi(\beta)\}$.  For each $\sigma(k) \in S$ we
  have also removed a $(\sigma(k),\sigma(j))$ inversion, which can only
  increase the weight of the configuration.

\begin{figure}[!htb]
  \centering
  \subfloat[]{
  \begin{minipage}[c][.35\width]{0.5\textwidth}
  \centering%	
  $
    \begin{array}{ccccccccc}
      \textbf{5} & 8 & 9 & \underline{2} & 10 & \underline{3} & \underline{4}
        & \underline{1} & \textbf{7} \\
      \textbf{5} & 8 & 9 & \underline{2} & \underline{3} & 10 & \underline{4} &
        \underline{1} & \textbf{7} \\
      \textbf{5} & 8 & \underline{2} & 9  & \underline{3} & 10 & \underline{4} &
        \underline{1} & \textbf{7} \\
      \textbf{5} & \underline{2} & 8 & 9 & \underline{3} & 10 & \underline{4} &
        \underline{1} & \textbf{7} \\
    \end{array}$
\end{minipage}
\label{Astage1}
}
\subfloat[]{
\begin{minipage}[c][.35\width]{%
0.5\textwidth} \centering%	
$  \begin{array}{ccccccccc} 
        \textbf{5} & \underline{2} & 8 & 9 & \underline{3} & 10 & \underline{4}
          & \underline{1} & \textbf{7} \\
        \underline{2} & \textbf{5} & 8 & 9 & \underline{3} & 10 & \underline{4}
          & \underline{1} & \textbf{7} \\
        \underline{2} & 8 & \textbf{5} & 9 & \underline{3} & 10 & \underline{4}
          & \underline{1} & \textbf{7} \\
        & & & & \vdots \\
        \underline{2} & 8 & 9 & \underline{3} & 10 & \underline{4} &
          \underline{1} & \textbf{5} & \textbf{7} \\
        \underline{2} & 8 & 9 & \underline{3} & 10 & \underline{4} &
          \underline{1} & \textbf{7} & \textbf{5} \\
      \end{array}$
\end{minipage}
\label{Astage2}
}
\caption{Stages 1 and 2 of the canonical path for transposing 5 and 7.}
\end{figure}

\vspace{.05in}
  \noindent  \textbf{Stage 3:} Perform a series of nearest neighbor swaps to
  move $\sigma(j)$ to the left until it is in the same position $\sigma(i)$
  was originally.  While we created an $(\sigma(k),\sigma(j))$ inversion for
  each $\sigma(k) \in S$, these inversions have the same weight as the
  $(\sigma(i),\sigma(k))$ inversion we removed in Stage 2.  In addition we
  have removed an $(\sigma(l),\sigma(j))$ inversion for each $\sigma(l) \in
  B$.  

\vspace{.05in}
 \noindent \textbf{Stage 4:} Finally we want to return the elements in $S$ and
 $B$ to their original position.  Starting with the left-most element in $S$
 that was moved in Stage 1, perform the nearest neighbor swaps to the right
 necessary to return it to its original position.  
\vspace{.05in}

It's clear from the definition of the stages that along any path the weight of
a configuration never decreases below the weight of
$\min(\pi(\sigma),\pi(\beta))$.  Given a transition $(\upsilon,\omega)$ of
$\mnn$ we must upper bound the number of canonical paths
$\gamma_{\sigma\beta}$ that use this edge.  Thus, we analyze the amount of
information needed in addition to $(z,w)$ to determine $\sigma$ and $\beta$
uniquely.  First we record whether $(\sigma,\beta)$ is already a nearest
neighbor transition or which stage we are in.  Next for any of the 4 stages we
record the original location of $\sigma(i)$ and $\sigma(j)$.  Given this
information, along with $\upsilon$ and $\omega$, we can uniquely recover
$(\sigma,\beta)$.  Hence, there are at most $4n^2$ paths through any edge
$(\upsilon,\omega)$.  Also, note that the maximum length of any path is $4n$.

Next we bound the quantity $A$ which is needed to apply
Theorem~\ref{Comparison}.  Recall that for each transition $(\upsilon,
\omega)$ of the path $\gamma_{\sigma,\beta}$, we have guaranteed that
$\pi(\upsilon)\geq\min\{ \pi(\sigma),\pi(\beta)\}.$  Assume first that
$\pi(\upsilon)\geq \pi(\sigma)$.  Then
\begin{align*}
  A &= \max_{(\upsilon,\omega) \in  E(P)} \left
  \{\frac{1}{\pi(\upsilon)P(\upsilon,\omega)}\sum_{\Gamma(\upsilon,\omega)}|\gamma_{\sigma\beta}|\pi(\sigma)P'(\sigma,\beta)
  \right \}\\
  &\leq \max_{(\upsilon,\omega) \in E(P)}
  \sum_{\Gamma(\upsilon,\omega)}2n\frac{P'(\sigma,\beta)}{{P(\upsilon,\omega)}}\
  \leq \max_{(\upsilon,\omega) \in E(P)}
  \sum_{\Gamma(\upsilon,\omega)}2n\frac{1/\binom{n}{2}}{\frac{\lambda}{(1+\lambda)(n-1)}}
  \ = \ O(n^2).
\end{align*}
If, on the other hand, $\pi(\upsilon)\geq \pi(\beta)$, then we use detailed
balance to obtain:
\begin{align*}
  A &= \max_{(\upsilon,\omega) \in  E(P)} \left
  \{\frac{1}{\pi(\upsilon)P(\upsilon,\omega)}\sum_{\Gamma(\upsilon,\omega)}|\gamma_{\sigma\beta}|\pi(\sigma)P'(\sigma,\beta)
  \right \}\\
  &= \max_{(\upsilon,\omega) \in  E(P)} \left
  \{\frac{1}{\pi(\upsilon)P(\upsilon,\omega)}\sum_{\Gamma(\upsilon,\omega)}|\gamma_{\sigma\beta}|\pi(\beta)P'(\beta,\sigma)
  \right \}\\
  &\leq \max_{(\upsilon,\omega) \in E(P)}
  \sum_{\Gamma(\upsilon,\omega)}2n\frac{P'(\beta,\sigma)}{{P(\upsilon,\omega)}}\\
  &\leq \max_{(\upsilon,\omega) \in E(P)}
  \sum_{\Gamma(\upsilon,\omega)}2n\frac{1/\binom{n}{2}}{\frac{\lambda}{(1+\lambda)(n-1)}}
  \ = \ O(n^2).
\end{align*}
In either case, we have $A=O(n^2)$.  Let $\lambda =\min_{i<j} \lambda_{j,i}.$
Then $\pi_* = \min_{\rho \in\Omega}\pi(\rho) \geq{
\lambda^{\binom{n}{2}}}/{n!}$, so $\log(1/(\epsilon
\pi_*))=O(n^2\log\epsilon^{-1})$, as above.  Applying Theorem~\ref{Comparison}
proves that $\tau_{nn}(\epsilon) = O(n^9\log(n/\epsilon))$. 
\end{proof}

\begin{Remark}
  By repeating Stage 1 of the path a constant number of times, it is possible
  to relax the weakly monotone condition slightly if we are satisfied with a
  polynomial bound on the mixing time.
\end{Remark}

\section{Slow Mixing of $\mnn$}
\label{slow}
We conclude by showing that while $\mnn$ is rapidly mixing for two large,
interesting classes of inputs, this is not true in general. In particular, we
show that there are positively biased permutations for which the chain $\mnn$
requires exponential time to converge to equilibrium.  This disproves the
conjecture that the chain will always be fast when ${\bf P}$ satisfies $p_{ij}
\geq 1/2$ for all $i < j$.  

Our example comes from sampling staircase walks with fluctuating bias, which
were examined in~\cite{GPR09} and~\cite{rs}.  Staircase walks are sequences of
$n$ ones and $n$ zeros, which correspond to paths from $(0,n)$ to $(n,0)$,
where each 1 represents a step to the right and each 0 represents a step down
(see Figure~\ref{staircase}b). For ease of notation in the following proof, we
replace the zeroes by negative ones.  In~\cite{rs}, Randall and Streib
examined the Markov chain which attempts to swap a neighboring $(1,-1)$ pair,
which essentially adds or removes a unit square from the region below the
walk, with probability depending on the position of that unit square.  We will
show that for our choice of ${\bf P}$, permutations are equivalent to
staircase walks, and hence the proof that the Markov chain on staircase walks
is slow applies in our setting as well.

Suppose, for ease of notation, that we are sampling permutations with $2n$
entries (having an odd number of elements will not cause qualitatively
different behavior).  Let $M=n-\sqrt{n}$, $\epsilon=1/(16n+2)$, and
$\frac{1}{65}<\delta<\frac{1}{2}$ be a constant to be defined later.  For
$i<j\leq n$ or $n<i<j$, $p_{i,j}=1$, ensuring that once the elements
$1,2,\ldots, n$ get in order, they stay in order (and similarly for the
elements $n+1,n+2,\ldots, 2n$).  The $p_{i,j}$ values for $i\leq n<j$ are
defined as follows (see Figure~\ref{staircase}a): $$p_{i,j}=
\begin{cases}
  1-\delta  & \text{ if $i-j+2n+1\geq n+M$};\\ 
  \frac{1}{2}+\epsilon & \text{otherwise}.
\end{cases}$$
\noindent Since the smallest (largest) $n$ elements of the biased permutation
never change order once they get put into increasing order, permutations with
these elements out of order have zero stationary probability.  Hence we can
represent the smallest $n$ numbers as ones and the largest $n$ numbers as
negative ones, assuming that within each class the elements are in increasing
order.  Given a permutation $\sigma$, let $f(\sigma)$ be the sequence of 1's
and -1's such that $f(\sigma)_i =1$ if $i\leq n$ and $-1$ otherwise.  Then if
$\sigma$ is such that elements $1,2,\ldots, n$ and elements $n+1,n+2,\ldots,
2n$ are each in order, $f(\sigma)$ maps $\sigma$ uniquely to a staircase walk.
For example, the permutation $\sigma = (5,1,7,8,4,3,6,2)$ maps to
$f(\sigma)=(-1,1,-1,-1,1,1,-1,1)$.
The probability that an adjacent 1 and -1 swap in $\mnn$ then depends on how
many 1's and -1's occur before that point in the permutation.  Specifically,
if element $i$ is $-1$ and element $i+1$ is 1 then we swap them with
probability $\frac{1}{2}+\epsilon$ if the number of 1's occurring before
position $i$ plus the number of -1's occurring after $i+1$ is less than
$n+M-1$.  Otherwise, they swap with probability $1-\delta$.  Equivalently, the
probability of adding a unit square at position $v=(x,y)$, which is called the
\emph{bias at $v=(x,y)$}, is $\frac{1}{2}+\epsilon$ if $x+y\leq n+M$, and
$1-\delta$ otherwise; see Figure~\ref{staircase}a.  We will show that in this
case, the Markov chain is slow.  The idea is that in the stationary
distribution, there is a good chance that the positive and negative ones will
be well-mixed, since this is a high entropy situation.  However, the identity
permutation also has high weight, and the parameters are chosen so that the
entropy of the well-mixed permutations balances with the energy of the maximum
(identity) permutation, and that to get between them is not very likely (low
entropy and low energy).

We identify sets $S_1,S_2,S_3$ such that $\pi(S_2)$ is exponentially smaller
than both $\pi(S_1)$ and $\pi(S_3)$, but to get between $S_1$ and $S_3$,
$\mnn$ and $\mt$ must pass through $S_2$, the cut.  Then we use the
\emph{conductance} to prove $\mnn$ and $\mt$ are slowly mixing.  For an
ergodic Markov chain with stationary distribution $\pi$, the conductance is
$$\Phi = \min_{{S\subseteq\Omega}\atop{ \pi(S) \leq 1/2}}\sum_{s_1\in S,
s_2\in \bar{S}}\pi(s_1)\prob(s_1,s_2)/{\pi(S)},$$ and we will show that the
bad cut $(S_1,S_2,S_3)$ defined in Section~\ref{slow} implies that $\Phi$ is
exponentially small.  The following theorem relates the conductance and mixing
time (see \cite{JS89}).
\begin{theorem}
  \label{conductance}
  For any Markov chain with conductance $\Phi$, \
  $\tau \geq ({4\Phi})^{-1} - 1/2.$
\end{theorem}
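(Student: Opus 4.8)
The plan is to reproduce the standard conductance lower bound of \cite{JS89}: exhibit a starting distribution that is trapped inside the worst bottleneck set for a long time, and control how slowly probability mass leaks out of it. Since $\tau(\epsilon)$ is a worst case over deterministic starting states and total variation distance is convex in the starting distribution, it suffices to find \emph{some} initial distribution $\mu_0$ that remains far from $\pi$ for many steps, because $\|\mu_0\prob^t-\pi\|_{tv}\leq\max_x\|\prob^t(x,\cdot)-\pi\|_{tv}$. Concretely, let $S\subseteq\Omega$ with $\pi(S)\leq 1/2$ attain the minimum defining $\Phi$, so that $\sum_{s_1\in S,\,s_2\in\bar{S}}\pi(s_1)\prob(s_1,s_2)=\Phi\,\pi(S)$, and let $\mu_0$ be $\pi$ conditioned on $S$: $\mu_0(x)=\pi(x)/\pi(S)$ for $x\in S$, and $\mu_0(x)=0$ otherwise. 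Write $\mu_t=\mu_0\prob^t$.

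First I would establish the pointwise domination $\mu_t(x)\leq\pi(x)/\pi(S)$ for every $x$ and every $t\geq 0$: this holds at $t=0$, and it is preserved by one step of the chain, since $\mu_{t+1}(y)=\sum_x\mu_t(x)\prob(x,y)\leq\pi(S)^{-1}\sum_x\pi(x)\prob(x,y)=\pi(y)/\pi(S)$ by stationarity $\pi\prob=\pi$. Next I would bound the escape rate: splitting $\mu_{t+1}(\bar{S})$ into the mass already in $\bar{S}$ (which is at most $\mu_t(\bar{S})$, since the chain is stochastic) plus the mass flowing in from $S$, and bounding the inflow with the domination lemma, gives $\mu_{t+1}(\bar{S})\leq\mu_t(\bar{S})+\pi(S)^{-1}\sum_{s_1\in S,\,s_2\in\bar{S}}\pi(s_1)\prob(s_1,s_2)=\mu_t(\bar{S})+\Phi$. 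Iterating from $\mu_0(\bar{S})=0$ yields $\mu_t(\bar{S})\leq t\Phi$, so $\mu_t(S)\geq 1-t\Phi$. Testing the total variation distance on the single event $S$ then gives $\|\mu_t-\pi\|_{tv}\geq\mu_t(S)-\pi(S)\geq(1-t\Phi)-1/2=1/2-t\Phi$, so the chain is not within distance $1/4$ of $\pi$ while $t<(4\Phi)^{-1}$; accounting for the integrality of $\tau$ and the precise $\epsilon$-convention yields $\tau\geq(4\Phi)^{-1}-1/2$.

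The only step with genuine content is the domination lemma $\mu_t\leq\pi/\pi(S)$, so that is where I would be most careful: it is exactly what lets the definition of $\Phi$ be plugged directly into the inflow term, and it uses nothing beyond invariance of $\pi$ --- in particular no reversibility, so the bound applies verbatim to both $\mnn$ and $\mt$. I would also flag that $S$ must be taken to be an \emph{exact} minimizer of the conductance ratio, so that the escape inequality carries the constant $\Phi$ and not something larger. When this is applied in Section~\ref{slow}, one takes $S$ to be (essentially) $S_1$ together with the cut $S_2$; the bad cut forces the corresponding ratio $\sum_{s_1\in S,\,s_2\in\bar{S}}\pi(s_1)\prob(s_1,s_2)/\pi(S)$ to be exponentially small, and the theorem then makes the mixing times of $\mnn$ and $\mt$ exponentially large.
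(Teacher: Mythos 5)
Your argument is correct and is essentially the standard Jerrum--Sinclair bottleneck argument that the paper cites from \cite{JS89} without reproducing a proof. The three key steps --- the pointwise domination $\mu_t \leq \pi/\pi(S)$ preserved under one step of the chain by stationarity, the resulting escape bound $\mu_{t+1}(\bar{S})\leq\mu_t(\bar{S})+\Phi$, and testing total variation on the event $S$ together with the convexity bound $\|\mu_0\prob^t-\pi\|_{tv}\leq\max_x\|\prob^t(x,\cdot)-\pi\|_{tv}$ --- are all correct and used in the right order, and you rightly note that no reversibility is needed, only invariance of $\pi$. You also correctly flag that $S$ should be taken as an exact minimizer of the conductance ratio so that $\Phi$, not $\Phi(S)$, appears in the escape inequality (the minimum is achieved because $\Omega$ is finite). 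The only loose end is the translation from $d(t)>1/4$ for $t<(4\Phi)^{-1}$ into the stated bound $\tau\geq(4\Phi)^{-1}-1/2$; as you say, this is just a matter of the unstated $\epsilon$-convention and integrality, and in fact your argument with $\epsilon=1/4$ gives the slightly stronger $\tau(1/4)\geq(4\Phi)^{-1}$, so the theorem as stated follows.
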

We are now ready to prove the main theorem from this section.

\begin{theorem}\label{slowmixing} 
  There exists a set ${\bf P}$ 
  for which $\mnn$ has mixing time $\tau(\epsilon) \geq
  e^{n/24}/4 - 1/2.$
\end{theorem}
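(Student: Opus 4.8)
The plan is to apply the conductance bound of Theorem~\ref{conductance}: since $\tau(\epsilon)\ge (4\Phi)^{-1}-1/2$, it suffices to produce a set $S$ with $\pi(S)\le 1/2$ and $\Phi(S)\le e^{-n/24}$. The first step is to pass from permutations to staircase walks. Under the given $\mathbf P$ the smallest $n$ values (and likewise the largest $n$) are forced into increasing order in any state of positive stationary probability, so the map $f$ of the preamble is a bijection between the support of $\pi$ and the set of staircase walks from $(0,n)$ to $(n,0)$; moreover a step of $\mnn$ on such a permutation is exactly a step of the square add/delete chain of \cite{rs} on the corresponding walk, adding the square at $(x,y)$ with probability $\tfrac{1}{2}+\epsilon$ when $x+y\le n+M$ and $1-\delta$ otherwise. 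Hence it is enough to lower bound the mixing time of that walk chain, and from now on I argue entirely with walks $w$ and their weights $\pi(w)\propto\prod_{(x,y)\text{ below }w}\lambda_{x,y}$, where $\lambda_{x,y}=\tfrac{1/2+\epsilon}{1/2-\epsilon}$ on the ``low'' region $\{x+y\le n+M\}$ and $\lambda_{x,y}=\tfrac{1-\delta}{\delta}$ on the ``high'' corner triangle $\{x+y>n+M\}$, which contains $\Theta(n)$ squares since $M=n-\sqrt n$.

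Next I build the bottleneck, following \cite{rs}. For a walk $w$ let $h(w)$ be the number of unit squares below $w$; each move of the chain changes $h$ by exactly $\pm1$. Fix thresholds $K_1<K<K_2$ and set $S_1=\{w:h(w)\le K_1\}$ (the ``disordered'' phase, where $w$ stays near the antidiagonal and does not enter the high corner), $S_3=\{w:h(w)\ge K_2\}$ (the ``ordered'' phase, where $w$ fills essentially all of the high corner, the extreme case being the fully ordered permutation), and $S_2=\{w:K_1<h(w)<K_2\}$. Because $h$ changes by $\pm1$, every trajectory from $S_1$ to $S_3$ passes through $S_2$, and there is no transition directly between $S_1$ and $S_3$. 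Taking $S$ to be whichever of $S_1,S_3$ has $\pi$-measure at most $1/2$ (one of them does, as they are disjoint), reversibility bounds the ergodic flow out of $S$ by $\pi(S_2)$, so $\Phi(S)\le \pi(S_2)/\min\{\pi(S_1),\pi(S_3)\}$. It remains to show this ratio is at most $e^{-n/24}$, which, since the normalizing constant cancels, amounts to comparing the unnormalized weights $\sum_{w\in S_i}\prod\lambda_{x,y}$.

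The heart of the proof is the resulting entropy--energy balance, which I would carry out as in \cite{rs} with these parameters. For $S_1$: the near-diagonal walks, of which there are $\binom{2n}{n}^{1-o(1)}$, each have weight at least $\big(\tfrac{1/2+\epsilon}{1/2-\epsilon}\big)^{cn^2}$ for a fixed $c>0$, so, using $\epsilon=\Theta(1/n)$, $\log\big(\sum_{w\in S_1}\text{wt}(w)\big)=\Theta(n)$ with an explicit constant. For $S_3$: walks pinned near the corner are far fewer, losing only an $e^{O(\sqrt n\log n)}$ factor in count, but each collects the full corner bonus $\big(\tfrac{1-\delta}{\delta}\big)^{\Theta(n)}$, and the constant $\delta\in(\tfrac{1}{65},\tfrac{1}{2})$ is chosen precisely so this bonus compensates the entropy of $S_1$, making $\log\big(\sum_{w\in S_3}\text{wt}(w)\big)$ within $o(n)$ of that of $S_1$. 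For $S_2$: a walk with $h(w)$ in the intermediate band has its front pinned in a region that is both far enough from the diagonal to have lost an $e^{\Theta(n)}$ entropy factor relative to $S_1$ and short of the corner, so it has not collected the bonus that rescues $S_3$; summing over the $O(n^2)$ levels $K_1<k<K_2$ still leaves $\sum_{w\in S_2}\text{wt}(w)\le e^{-\Theta(n)}\min\{\sum_{w\in S_1}\text{wt}(w),\sum_{w\in S_3}\text{wt}(w)\}$. Choosing $K_1,K_2,\delta$ so the constants line up gives $\pi(S_2)/\min\{\pi(S_1),\pi(S_3)\}\le e^{-n/24}$, hence $\Phi\le e^{-n/24}$, and Theorem~\ref{conductance} yields $\tau(\epsilon)\ge e^{n/24}/4-1/2$. (The same cut shows $\mt$ is slow, since it too cannot move between $S_1$ and $S_3$ without visiting $S_2$.)

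The step I expect to be the main obstacle is precisely this last block of estimates. One must count walks confined to a prescribed band accurately enough to see the $e^{\Theta(n)}$ entropy gap between the bulk phase $S_1$ and the bottleneck $S_2$, while simultaneously tracking the product of the $\Theta(n^2)$ low-region factors $1+\Theta(1/n)$ between the diagonal and the corner to leading order (individually negligible, but together an $e^{\Theta(n)}$ term), and then tune $\delta$, $M$, $K_1$, $K_2$ so that the disordered entropy, the ordered corner bonus, and the bottleneck deficit combine to give the constant $1/24$. The remaining ingredients --- the reduction to walks, the level-set cut, and the reversibility/conductance computation --- are routine given the framework of the preliminaries and the walk analysis already available in \cite{rs}.
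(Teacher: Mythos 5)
Your high-level framework (conductance via Theorem~\ref{conductance}, the reduction from permutations to staircase walks, and an entropy--energy bottleneck tuned by $\delta$) matches the paper's, but you diverge at the crucial choice of cut statistic, and that divergence creates a real gap. The paper partitions the walks by $\max(\sigma)$, the maximum partial sum (equivalently, whether the walk enters the high-bias corner triangle $\{x+y>n+M\}$): $S_1=\{\max<n+M\}$, $S_2=\{\max=n+M\}$, $S_3=\{\max>n+M\}$. With that choice, $\pi(S_2)$ is controlled almost for free: $|S_2\cup S_3|\le\binom{2n}{n}\sqrt{\pi}\,n^{3/2}e^{-n/3}$ by a tethered-vs-untethered random walk Chernoff bound (with $M=n-\sqrt n$, so $M^2/2n\ge n/3$), while the weight of any $\sigma\in S_2$ is at most $\gamma^{n^2-n/2}\le e^{n/4-1/8}$; combining gives $\pi(S_2)/\pi(S_1)<e^{-n/24}$ for every $\delta$. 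Then $\delta$ is chosen by an intermediate value theorem argument: $f(\xi)=Z\pi(S_3)-Z\pi(S_1)$ is a polynomial in $\xi=(1-\delta)/\delta$, negative at $\xi=\gamma$ and shown positive at $\xi=4e^2$, so some $\xi\in(\gamma,4e^2)$ gives $\pi(S_1)=\pi(S_3)$.

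You instead cut by the area $h(w)=a(w)+b(w)$. This is where the gap lies. The paper's Chernoff argument applies because touching the high corner is a rare tail event for the random walk; a band of intermediate areas $\{K_1<h(w)<K_2\}$ is not a rare event in any comparably clean way. In fact, many walks with area in that band never reach the corner at all, and those walks pick up weight $\gamma^{h(w)}$ with exponentially many of them at each area level, so it is not at all clear that $\pi(S_2)$ is small relative to $\pi(S_1)$. You flag "count walks confined to a prescribed band" as the main obstacle, but it is more than an obstacle: there is no Chernoff-type bound waiting to be invoked, and you supply no substitute. Your intuition that $h(w)\le K_1$ means "near the antidiagonal" is also misleading --- small area means near the down-then-right corner of the box, not the antidiagonal; near-antidiagonal walks have area $\approx n^2/2$. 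Finally, your parenthetical that "the same cut shows $\mt$ is slow" does not hold for the area cut: a single transposition of a distant $(1,0)$ pair can change $h$ by $\Theta(n)$, so $\mt$ can jump over the band $S_2$ entirely. The paper's max-height cut is what survives here, precisely because a single swap changes $\max(\sigma)$ by at most 2, so widening $S_2$ by one level suffices. You also omit the intermediate-value-theorem device for producing a valid $\delta$; "tune so the constants line up" is exactly the step the paper makes rigorous.
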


\begin{figure}[htb] \centering
 \includegraphics[scale=.08]{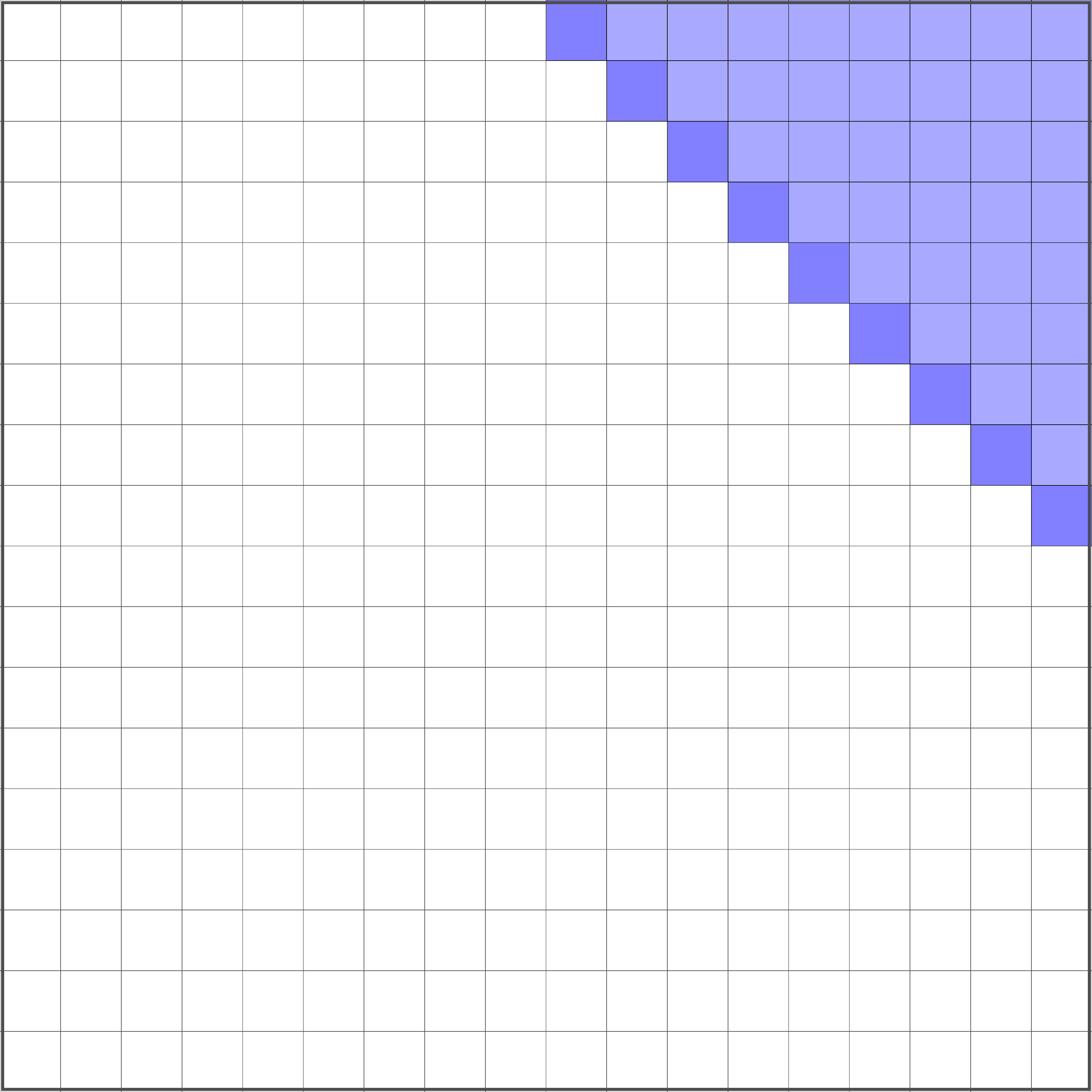}
  \put(-80,50){$\frac{1}{2} + \epsilon$} 
  \put(-30,95){$1-\delta$} \tiny
  \put(-90,118){$M$} 
  \put(-40,118){$n-M$} 
  \hspace{1in}
   \includegraphics[scale=.08]{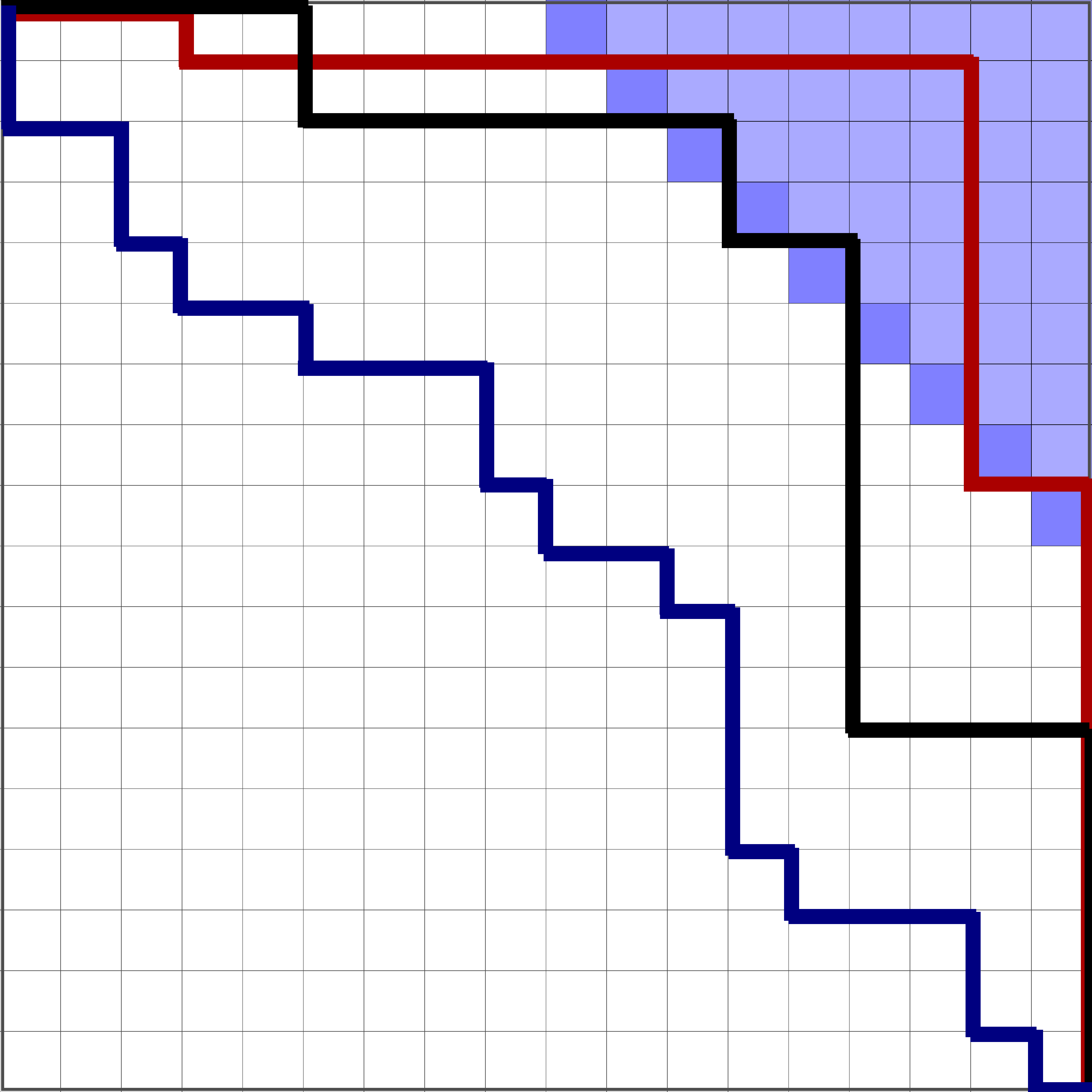}
  \caption{(a) Fluctuating bias with exponential mixing time. (b) Staircase
  walks in $S_1, S_2,$ and $S_3$.}
  \label{staircase}
\end{figure}

\begin{proof}
  For a staircase walk $\sigma$ consisting of a sequence of steps
  $\sigma_{i}\in\{\pm 1\}$, define the \emph{height of $\sigma_i$} as
  $\sum_{j\leq i} \sigma_{j}$, and let $\max(\sigma) $ be the maximum height
  of $\sigma_i$ over all $1\leq i\leq 2n$.  Let $S_1$ be the set of
  configurations $\sigma$ such that $\max(\sigma)< n+M$, $S_2$ the set of
  configurations such that $\max(\sigma)= n+M$, and $S_3$ the set of
  configurations such that $\max(\sigma)> n+M$.  That is, $S_1$ is the set of
  configurations that never reach the dark blue diagonal in
  Figure~\ref{staircase}b, $S_2$ is the set whose maximum peak is on the dark
  blue line, and $S_3$ is the set which crosses that line and contains squares
  in the light blue triangle.  Clearly to move from $S_1$ to $S_3$, the Markov
  chain must go through $S_2$.

  Define $\gamma=(1/2+\epsilon)/(1/2-\epsilon)$, which is the ratio of two
  configurations that differ by swapping a $(1,-1)$ pair with probability
  $\frac{1}{2}+\epsilon$.  By the definition of $\epsilon$, we have
  $\gamma=1+\frac{1}{4n}$.  Let $\xi = (1-\delta)/\delta$, which is the ratio
  of two configurations that differ by swapping a $(1,-1)$ pair with
  probability $1-\delta$.  Finally, let $b(\sigma)$ be the number of tiles
  below the diagonal $M$ in $\sigma$ and $a(\sigma)$ be the number of tiles
  above the diagonal $M$ in $\sigma$.  Then by detailed balance,
  $\pi(S_1)={Z}^{-1}\sum_{\sigma\in S_1} \gamma^{a(\sigma)},
  \pi(S_2)={Z}^{-1}\sum_{\sigma\in S_2}\gamma^{a(\sigma)},$ and $\pi(S_3)
  =Z^{-1} \sum_{\sigma\in S_3 } \gamma^{b(\sigma)} \xi^{a(\sigma)},$ where $Z$
  is a normalizing constant.  We will show that there exists a constant
  $\frac{1}{65}<\delta <\frac{1}{2}$ such that $\pi(S_2)$ is exponentially
  smaller than both $\pi(S_1)$ and $\pi(S_3)$, which will have equal weight.
  
  First we show that $\pi(S_2)$ is exponentially smaller than $\pi(S_1)$ for
  all values of $\delta$.  Since there are at most $n^2-(n-M)^2/2=n^2-n/2$
  tiles with weight $\gamma$ in any $\sigma\in S_2$, we have $$\pi(S_2)\ =\
  {Z}^{-1}\sum_{\sigma\in S_2}\gamma^{a(\sigma)}\ \leq\ {Z}^{-1}
  \gamma^{n^2-n/2}|S_2| \ \leq \ {Z}^{-1} e^{n/4-1/8}|S_2|, $$ since
  $\gamma^{n^2-n/2} = (1+  \frac{1}{4n})^{n^2-n/2}  \leq  e^{n/4 - 1/8}. $

  Next we will bound $|S_2\cup S_3|$, which in turn provides an upper bound on
  $|S_2|$.  The unbiased Markov chain is equivalent to a simple random walk
  $W_{2n}=X_1+X_2+\cdots +X_{2n}=0$, where $X_i\in \{+1,-1\}$ and where a $+1$
  represents a step to the right and a $-1$ represents a step down.  We call
  this random walk {\it tethered} since it is required to end at $0$ after
  $2n$ steps.  Compare walk $W_{2n}$ with the untethered simple random walk
  $W_{2n}'=X_1'+X'_2+\ldots + X'_{2n}$.
  \begin{eqnarray*}
    P\left(\max_{1\leq t\leq 2n} W_t\geq M\right)&=&P\left(\max_{1\leq t\leq
    2n} W'_t\geq M ~|~ W'_{2n}=0\right)\\
    &=&\frac{ P\left(\max_{1\leq t\leq 2n} W'_t\geq M\right)}{ P(W'_{2n}=0)
    }\\
    &=&  \frac{2^{2n}}{\binom{2n}{n}}P\left(\max_{1\leq t\leq 2n} W'_t\geq
    M\right)\\
    &\approx&\sqrt{\pi n}~~P\left(\max_{1\leq t\leq 2n} W'_t\geq M\right).
  \end{eqnarray*}
  Since the $\{X'_i\}$ are independent, we can use Chernoff bounds to see that

  $$P\left(\max_{1\leq t\leq 2n}W'_t\geq M\right) \ \leq \  2nP(W'_{2n}\geq M) \ \leq \  2n e^{\frac{-M^2}{2n}}.$$
  Notice that $M^2/(2n) = (n-\sqrt{n})^2/(2n)= (\sqrt{n}-1)^2/2\geq n/3 $ for
  $n\geq 4$.
  Together these show that $P\left(\max_{1\leq t\leq 2n} W_t\geq M\right)\leq
  \sqrt{\pi} n^{3/2}e^{-n/3}$.  In particular, 
  $$|S_2\cup S_3|\leq \binom{2n}{n}\sqrt{\pi}n^{3/2} e^{-n/3}.$$
  Therefore we have
  \begin{eqnarray*}
    \frac{\pi(S_2)}{\pi(S_1)}&\leq &\frac{\frac{1}{Z} e^{n/4-1/8}|S_2\cup
    S_3|}{\frac{1}{Z} \left(\binom{2n}{n} - |S_2\cup S_3|\right)}\\
    &\leq&  e^{n/4-1/8} \left(\frac{\binom{2n}{n}}{|S_2\cup S_3|} -
    1\right)^{-1}\\
    &\leq&  e^{n/4-1/8}\left(\frac{\binom{2n}{n}}{
    \binom{2n}{n}\sqrt{\pi}n^{3/2} e^{-n/3}} - 1\right)^{-1}\\
    &\leq& \frac{e^{n/4-1/8}\sqrt{\pi}n^{3/2}}{e^{n/3} - \sqrt{\pi}n^{3/2}} \
    < \ e^{-n/24},
  \end{eqnarray*}
  for large enough $n$.  Therefore, $\pi(S_2)$ is exponentially smaller than
  $\pi(S_1)$ for every value of $\delta$.

  Our goal is to show that there exists a value of $\delta$ for which
  $\pi(S_3)=\pi(S_1)$, which will imply that $\pi(S_2)$ is also exponentially
  smaller than $\pi(S_3)$, and hence the set $S_2$ forms a bad cut, regardless
  of which state the Markov chain begins in.  To find this value of $\delta$, we
  will rely on the continuity of the function $f(\xi) = Z\pi(S_3) - Z\pi(S_1)$
  with respect to $\xi=(1-\delta)/\delta$.  Notice that $Z\pi(S_1)$ is constant
  with respect to $\xi$ and $Z\pi(S_3) = \sum_{\sigma\in S_3 }
  \gamma^{b(\sigma)} \xi^{a(\sigma)}$ is just a polynomial in $\xi$.  Therefore
  $Z\pi(S_3)$ is continuous in $\xi$ and hence $f(\xi)$
  is also continuous with respect to $\xi$.  Moreover, when $\xi = \gamma$,
  clearly $Z\pi(S_3)< Z\pi(S_1)$, so $f(\gamma)<0$.  We will show that
  $f(4e^2)>0$, and so by continuity we will conclude that there exists a value
  of $\xi$ satisfying $\gamma< \xi < 4e^2$ for which $f(\xi)=0$ and
  $Z\pi(S_3)=Z\pi(S_1)$.  Clearly this implies that for this choice of $\xi$,
  $\pi(S_3)=\pi(S_1)$, as desired.  To obtain the corresponding value of
  $\delta$, we notice that $\delta = 1/(\xi+1)$.  In particular, $\delta$ is a
  constant satisfying $\frac{1}{65}<\delta<\frac{1}{2}.$

  Thus it remains to show that $f(4e^2)>0$.  First we notice that since the
  maximal tiling is in $S_3$, $\pi(S_3)\geq {Z}^{-1}
  \gamma^{n^2-\frac{(n-M)^2}{2}}\xi^{\frac{(n-M)^2}{2}}.$ Also,
  $$\pi(S_1)={Z}^{-1}\sum_{\sigma\in S_1} \gamma^{a(\sigma)} \ < \
  Z^{-1}\binom{2n}{n} \gamma^{n^2-\frac{(n-M)^2}{2}}.$$
  Therefore
  $$\pi(S_1)/\pi(S_3)< \frac{\binom{2n}{n}}{\xi^{\frac{(n-M)^2}{2}}}
  \leq {(2e)^n}{\xi^{-n/2}} = 1 $$ since $\xi=4e^2$.  Hence $f(4e^2) =
  Z\pi(S_3) - Z\pi(S_1)> Z\pi(S_3) - Z\pi(S_3)=0,$ as desired.

  Thus, the conductance satisfies 
  $$\Phi \leq
  \frac{1}{\pi(S_1)}  \sum_{x\in S_1}\pi(x)\sum_{y\in S_2} P(x,y)\leq  
  \frac{1}{\pi(S_1)} \sum_{x\in
  S_1}{\pi(x)}\pi(S_2)\leq 
  e^{-n/24}.$$ 
    
  \noindent Hence, by Theorem~\ref{conductance}, the mixing time satisfies
  $$\tau\geq ({4e^{-n/24}})^{-1} - 1/2 \geq e^{n/24}/4-1/2.$$
\end{proof}

In fact, this proof can be extended to the more general Markov chain where we
can swap any $1$ with any $0$, as long as we maintain the correct stationary
distribution.  This is easy to see, because any move that swaps a single 1
with a single 0 can only change the maximum height by at most 2 (see
Figure~\ref{longswaps}).  If we expand $S_2$ to include all configurations
with maximum height $n+M$ or $n+M+1$, $\pi(S_2)$ is still exponentially
smaller than $\pi(S_1)$ and $\pi(S_3)$.  Hence the Markov chain over
permutations that can make arbitrary transpositions can still take exponential
time to converge.

\begin{figure}[htb]
  \centering \includegraphics[scale=.08]{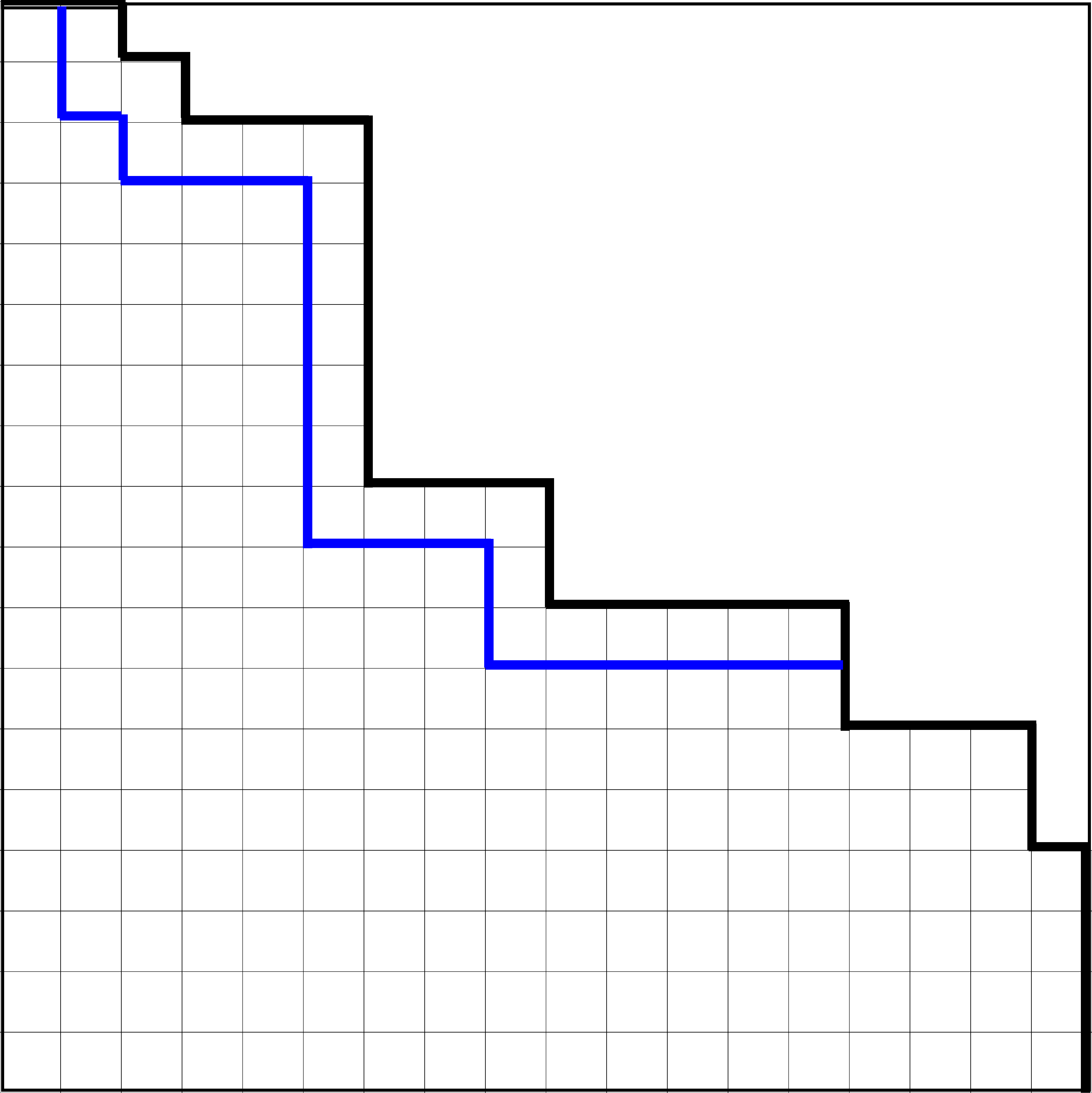}
  \caption{A move that swaps an arbitrary $(1,0)$ pair.}
  \label{longswaps}
\end{figure}

\section{Analyzing a Product of Markov chains} \label{productsection}
For each of our positive results, we showed that the Markov chain in question
can be decomposed into $M$ independent Markov chains.  Since each Markov chain
$\m_i$ operates independently, the overall mixing time will be roughly $M$
times the mixing time of each piece, slowed down by the inverse probability of
selecting that process. Similar results have been proved before (e.g., see
\cite{BBHM05, BR04}) in other settings.  We include the proof for
completeness.

\begin{theorem}
  \label{crossproductthm}
  Suppose the Markov chain $\m$ is a product of $M$ independent Markov
  chains $\m_1,\m_2,\ldots, \m_M$, where $\m$ updates $\m_i$ with probability
  $p_i$, where $\sum_i p_i=1$. If $\tau_i(\epsilon)$ is the mixing time for
  $\m_i$ and $\tau_i(\epsilon)\geq 4\ln\epsilon$ for each $i$, then
  \[\tau(\epsilon)\leq \max_{i=1,2,\ldots, M} \ \frac{2}{p_i}\ 
  \tau_i\left(\frac{\epsilon}{2M}\right).\]
\end{theorem}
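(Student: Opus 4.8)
The plan is to track, for each coordinate chain $\m_i$, the number of times it gets updated during the first $t$ steps of $\m$, and then to combine three ingredients: a Chernoff estimate guaranteeing that every coordinate is updated enough times, the defining property of the mixing times $\tau_i$, and the subadditivity of total variation distance over product measures.

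First I would record two standard facts. Since $\m$ updates $\m_i$ with probability $p_i$ and $\sum_i p_i = 1$, the stationary distribution of $\m$ is the product $\pi = \pi_1 \times \cdots \times \pi_M$ of the stationary distributions of the $\m_i$, as one checks by verifying the balance equation one coordinate at a time. Second, for probability measures $\mu_i, \nu_i$ on the coordinate spaces,
$$\big\|\,\mu_1\times\cdots\times\mu_M - \nu_1\times\cdots\times\nu_M\,\big\|_{tv} \ \leq\ \sum_{i=1}^M \|\mu_i - \nu_i\|_{tv},$$
proved by replacing one factor at a time and telescoping.

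Now fix a starting state $x = (x_1,\ldots,x_M)$ and set $t^\ast = \max_i (2/p_i)\,\tau_i(\epsilon/(2M))$. For any $t \geq t^\ast$, let $N_i(t)$ be the number of the first $t$ steps at which $\m$ chooses to update $\m_i$; then $N_i(t) \sim \mathrm{Bin}(t, p_i)$, and, conditioned on the vector $(N_1(t),\ldots,N_M(t)) = (k_1,\ldots,k_M)$, the state of $\m$ at time $t$ has law $P_1^{k_1}(x_1,\cdot)\times\cdots\times P_M^{k_M}(x_M,\cdot)$, because the coordinate processes evolve independently given their update counts. Averaging over the counts and applying the two facts above yields
$$\big\|P^t(x,\cdot) - \pi\big\|_{tv} \ \leq\ \sum_{i=1}^M \E\!\left[\,\big\|P_i^{N_i(t)}(x_i,\cdot) - \pi_i\big\|_{tv}\,\right].$$
To bound the $i$th summand, split on whether $N_i(t) < \tau_i(\epsilon/(2M))$. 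Since $\E[N_i(t)] = p_i t \geq 2\,\tau_i(\epsilon/(2M))$, this event is contained in $\{N_i(t) \leq \E[N_i(t)]/2\}$, which by the Chernoff bound $\p(N_i(t) \leq \E[N_i(t)]/2) \leq e^{-\E[N_i(t)]/8}$ and the hypothesized lower bound on $\tau_i$ (which gives $\tau_i(\epsilon/(2M)) \geq 4\ln(2M/\epsilon)$) has probability at most $\epsilon/(2M)$; it therefore contributes at most $\epsilon/(2M)$ to the expectation. On the complementary event $N_i(t) \geq \tau_i(\epsilon/(2M))$, so by the definition of $\tau_i(\cdot)$ (which builds in the ``for all later times'' clause), $\|P_i^{N_i(t)}(x_i,\cdot) - \pi_i\|_{tv} \leq \epsilon/(2M)$. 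Hence each summand is at most $\epsilon/M$, so $\|P^t(x,\cdot) - \pi\|_{tv} \leq \epsilon$ for all $t \geq t^\ast$ and all $x$, which is precisely the claimed bound $\tau(\epsilon) \leq t^\ast$.

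The step needing the most care is the conditioning claim: one must argue that, given the update-count vector, the joint law of the coordinates is exactly the product of the $P_i^{k_i}(x_i,\cdot)$, since this is what legitimizes pulling the product-measure subadditivity inside the expectation over $(N_1(t),\ldots,N_M(t))$. Everything else — the binomial tail estimate and the appeal to the definition of $\tau_i$ — is routine, and the per-coordinate accuracy $\epsilon/(2M)$ is forced: the target error $\epsilon$ is split into $M$ equal pieces $\epsilon/M$, and each piece is further split into a ``rarely updated'' contribution and an ``updated and mixed'' contribution.
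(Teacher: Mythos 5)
Your argument is correct, and it follows the same basic plan as the paper's proof---split each coordinate's contribution according to whether that coordinate has been updated at least $\tau_i(\epsilon/(2M))$ times, and control the rare undersampled event via a Chernoff bound---but your packaging of the two combining steps is different and somewhat cleaner. To combine per-coordinate errors, the paper proves and applies the sub-multiplicative inequality $1+2d_{tv}(P^t,\pi)\le\prod_i\bigl(1+2\,d_{tv}(B_i^t,\pi_i)\bigr)$, where $B_i=p_iP_i+(1-p_i)I$ is the lazily-run $i$th chain, whereas you quote the standard sub-additivity of total variation over product measures, $\|\mu_1\times\cdots\times\mu_M-\nu_1\times\cdots\times\nu_M\|_{tv}\le\sum_i\|\mu_i-\nu_i\|_{tv}$; with a per-coordinate budget of $\epsilon/(2M)$ the two yield equivalent conclusions. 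To reduce the lazy chain to the unlazy one, the paper expands $B_i^t=\sum_j\binom{t}{j}p_i^j(1-p_i)^{t-j}P_i^j$ algebraically and applies the triangle inequality to this mixture; you instead condition on the binomial update count $N_i(t)$ and average---this is precisely the same decomposition, phrased probabilistically. One small point worth making explicit: in passing from $\bigl\|\E\bigl[P_1^{N_1}(x_1,\cdot)\times\cdots\times P_M^{N_M}(x_M,\cdot)\bigr]-\pi\bigr\|_{tv}$ to $\sum_i\E\bigl[\|P_i^{N_i}(x_i,\cdot)-\pi_i\|_{tv}\bigr]$ you use \emph{both} convexity of the total variation norm (to pull $\|\cdot\|_{tv}$ inside the expectation over the counts) \emph{and} product sub-additivity (coordinatewise, under the expectation); you state the latter as one of your ``two facts'' but leave the former implicit. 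With that noted, the Chernoff estimate, the appeal to the ``for all later times'' clause in the definition of $\tau_i$, and the $\epsilon/(2M)$ accounting all match the paper's calculation, so the proof is complete.
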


\begin{proof}
  Suppose the Markov chain $\m$ has transition matrix $P$, and each $\m_i$ has
  transition matrix $P_i$ and state space $\Omega_i$.  Let
  $B_i=p_iP_i+(1-p_i)I$, where $I$ is the identity matrix of the same size as
  $P_i$, be the transition matrix of $\m_i$, slowed down by the probability
  $p_i$ of selecting $\m_i$.  First we show that the total variation distance
  satisfies $$1+2d_{tv}(P^t,\pi)\leq \prod_{i} (1+2d_{tv}(B_i^t,\pi_i)).$$ To
  show this, notice that for $x=(x_1,x_2,\ldots, x_M), y=(y_1,y_2,\ldots,
  y_M)\in \Omega$, $P^t(x,y)=\prod_i B_i^t(x_i,y_i)$.  Let $\epsilon_i
  (x_i,y_i)= B_i^t(x_i,y_i) - \pi_i(y_i)$ and for any $x_i\in\Omega_i$,
  $$\epsilon_i(x_i)=\sum_{y_i\in\Omega_i}|\epsilon_i(x_i,y_i)|\leq
  2d_{tv}(B_i^t,\pi_i).$$ Then,

  \begin{align*}
    d_{tv}(P^t,\pi)&=\max_{x\in \Omega} \frac{1}{2}\sum_{y\in \Omega}
    |P^t(x,y)-\pi(y)|\\
    &=\max_{x\in \Omega} \frac{1}{2}\sum_{y\in \Omega} \left|\prod_i
    B_i^t(x_i,y_i)-\prod_i\pi_i(y_i)\right|\\
    &=\max_{x\in \Omega} \frac{1}{2}\sum_{y\in \Omega} \left|\prod_i
    \left(B_i^t(x_i,y_i)-\pi_i(y_i) +
    \pi_i(y_i)\right)-\prod_i\pi_i(y_i)\right|\\
    &=\max_{x\in \Omega} \frac{1}{2}\sum_{y\in \Omega} \left|\prod_i
    \left(\epsilon_i(x_i,y_i) + \pi_i(y_i)\right)-\prod_i\pi_i(y_i)\right|\\
    &=\max_{x\in \Omega} \frac{1}{2}\sum_{y\in \Omega} \left|\sum_{S\subseteq
    [M], S\neq \emptyset}\prod_{i\in S}\epsilon_i(x_i,y_i)\prod_{i\notin
    S}\pi_i(y_i)\right|\\
    &\leq \max_{x\in \Omega} \frac{1}{2}\sum_{y\in \Omega} \sum_{S\subseteq [M],
    S\neq \emptyset}\prod_{i\in S}|\epsilon_i(x_i,y_i)|\prod_{i\notin
    S}|\pi_i(y_i)| \\
    &=\max_{x\in \Omega} \frac{1}{2} \sum_{S\subseteq [M], S\neq
    \emptyset}\prod_{i\in S}\sum_{y_i\in
    \Omega_i}|\epsilon_i(x_i,y_i)|\prod_{i\notin S}\sum_{y_i\in
    \Omega_i}|\pi_i(y_i)|\\
    &=\max_{x\in \Omega} \frac{1}{2} \sum_{S\subseteq [M], S\neq
    \emptyset}\prod_{i\in S}\epsilon_i(x_i)\prod_{i\notin S}1\\
    &=\max_{x\in\Omega} \frac{1}{2} \prod_i (1+\epsilon_i(x_i)) - 1/2\ \  \leq\
    \ \frac{1}{2} \prod_i (1+2d_{tv}(B_i^t,\pi_i)) - 1/2,
  \end{align*}

  as desired.  Thus in order to get $d_{tv}(P^t,\pi)\leq \epsilon,$ it suffices
  to show $d_{tv}(B_i^t,\pi_i)\leq \epsilon/(2M)$ for each $i$, because then
  \begin{align*}
    1+2d_{tv}(P^t,\pi)&\leq \prod_{i} (1+2d_{tv}(B_i^t,\pi_i))\\
    &\leq \prod_{i} (1+2\epsilon/(2M))\\
    &\leq e^{\epsilon} \ \leq \ 1+2\epsilon.
  \end{align*}
  Hence it suffices to show $d_{tv}(B_i^t,\pi_i)\leq \epsilon/(2M)$ for each $i$.

  Since $B_i^t= (p_iP_i+(1-p_i)I)^t = \sum_{j=0}^t \binom{t}{j}
  p_i^j(1-p_i)^{t-j}P_i^jI,$ we have
  \begin{align*}
    d_{tv}(B_i^t,\pi_i)&=\max_{x_i\in \Omega_i} \frac{1}{2}
    \sum_{y_i\in\Omega_i} \left|B_i^t(x_i,y_i)-\pi_i(y_i)\right|\\
    &=\max_{x_i\in \Omega_i} \frac{1}{2} \sum_{y_i\in\Omega_i}
    \left|\sum_{j=0}^t \binom{t}{j}p_i^j(1-p_i)^{t-j}
    P_i^j(x_i,y_i)-\pi_i(y_i)\right|\\
    &=\max_{x_i\in \Omega_i} \frac{1}{2} \sum_{y_i\in\Omega_i}
    \left|\sum_{j=0}^t \binom{t}{j}p_i^j(1-p_i)^{t-j}
    P_i^j(x_i,y_i)-\sum_{j=0}^t
    \binom{t}{j}p_i^j(1-p_i)^{t-j}\pi_i(y_i)\right|\\
    &\leq\max_{x_i\in \Omega_i} \frac{1}{2} \sum_{y_i\in\Omega_i} \sum_{j=0}^t
    \binom{t}{j}p_i^j(1-p_i)^{t-j} \left|P_i^j(x_i,y_i)-\pi_i(y_i)\right|\\
    &=\sum_{j=0}^t \binom{t}{j}p_i^j(1-p_i)^{t-j} \max_{x_i\in \Omega_i}
    \frac{1}{2} \sum_{y_i\in\Omega_i} \left|P_i^j(x_i,y_i)-\pi_i(y_i)\right|\\
    &=\sum_{j=0}^t \binom{t}{j}p_i^j(1-p_i)^{t-j} d_{tv}(P_i^j,\pi_i).
  \end{align*}
  Let $t_i= \tau_i(\epsilon/(4M))$.  Now, for $j\geq t_i=\tau_i(\epsilon/(4M))$,
  we have that $d_{tv}(P_i^j,\pi_i)<\epsilon/(4M)$.  For all $j$, we have
  $d_{tv}(P_i^j,\pi_i)\leq 2$, so if $X$ is a binomial random variable with
  parameters $t$ and $p_i$, we have

  \begin{align*}
    d_{tv}(B_i^t,\pi_i)&\leq\sum_{j=0}^t \binom{t}{j}p_i^j(1-p_i)^{t-j}
    d_{tv}(P_i^j,\pi_i)\\
    &=\sum_{j=0}^{t_i-1} \binom{t}{j}p_i^j(1-p_i)^{t-j} d_{tv}(P_i^j,\pi_i)+
    \sum_{j=t_i}^{t}\binom{t}{j}p_i^j(1-p_i)^{t-j} d_{tv}(P_i^j,\pi_i)\\
    &<2\sum_{j=0}^{t_i-1} \binom{t}{j}p_i^j(1-p_i)^{t-j} +
    \sum_{j=t_i}^{t}\binom{t}{j}p_i^j(1-p_i)^{t-j} \epsilon/(2M)\\
    &=2P(X<t_i) + \epsilon/(2M).
  \end{align*}
  By Chernoff bounds, $P(X<(1-\delta)tp_i)\leq e^{-tp_i \delta^2/2}.$ Setting
  $\delta=1-t_i/(tp_i)$, then for all $t>2t_i/p_i$, $\delta^2\geq 1/4$ and we
  have $$P(X<t_i)\leq e^{-tp_i \delta^2/2}\leq e^{-tp_i /8}\leq \epsilon/(8M),$$
  as long as $t\geq 8\ln(\epsilon/(8M))/p_i.$ Therefore for $t\geq \max\{
  8\ln(\epsilon/(8M))/p_i, 2t_i/p_i\},$
  \begin{align*}
    d_{tv}(B_i^t,\pi_i)&= 2P(X<t_i) + \epsilon/(4M) \\
    &\leq 2\epsilon/(8M) + \epsilon/(4M) \ = \ \epsilon/(2M).
  \end{align*}
  Hence by time $t$ the total variation distance satisfies $d_{tv}(P^t,\pi)\leq
  \epsilon.$
\end{proof}

\vspace{.2in}
\noindent
{\bf Acknowledgments}.
The authors are grateful to Jim Fill for introducing them to this problem and
for several useful conversations about the problem.

\newpage
\bibliographystyle{plain}
\bibliography{bib}
\end{document}